\newtheorem{problem}{Problem}
\newtheorem{proposition}{Proposition}
\newtheorem{theorem}{Theorem}
\newtheorem{definition}{Definition}
\newtheorem{remark}{Remark}
\newtheorem{corollary}{Corollary}
\newtheorem{lemma}{Lemma}
\newcommand{\ml}[1]{{\color{red}[ML: #1]}}
\newcommand{\pX}{\mathbf{x}}
\newcommand{\pV}{\mathbf{v}}
\newcommand{\pU}{\mathbf{u}}
\newcommand{\NN}{f^w}
\newcommand{\cov}{R}
\newcommand{\reals}{\mathbb{R}}
\newcommand{\naturals}{\mathbb{N}}
\newcommand{\naturalszero}{\mathbb{N}_{\geq 0}}
\newcommand{\N}{\mathcal{N}}
\renewcommand{\L}{\mathcal{L}}
\newcommand{\pdf}{p}
\newcommand{\pdfx}{\pdf_{\pX}}
\newcommand{\unsafe}{\mathrm{u}}
\newcommand{\safe}{\mathrm{s}}
\newcommand{\initial}{\mathrm{0}}
\newcommand{\set}[1]{\{#1\}}
\newcommand{\up}[1]{\overline{#1}}
\newcommand{\low}[1]{\underline{#1}}
\title{
Safety Guarantees for Neural Network Dynamic Systems via Stochastic Barrier Functions 
}
\author{%
Rayan~Mazouz$^{1*}$,
~Karan Muvvala$^{1*}$,
~Akash Ratheesh$^{1}$,
~Luca Laurenti$^{2}$, 
~Morteza Lahijanian$^{1}$
  \\
  $^{1}$ Department of Aerospace Engineering Sciences, University of Colorado Boulder, USA \\
  $^{2}$ Delft Center for Systems and Control, Delft University of Technology, The Netherlands \\
  \texttt{\{\href{mailto:rayan.mazouz@colorado.edu}{rayan.mazouz},\href{mailto:karan.muvvala@colorado.edu}{karan.muvvala},\href{mailto:akash.ratheesh@colorado.edu}{akash.ratheesh},\href{mailto:morteza.lahijanian@colorado.edu}{morteza.lahijanian}\}}\texttt{@colorado.edu} \\
   \texttt{\{\href{mailto:l.laurenti@tudelft.nl}{l.laurenti}\}}\texttt{@tudelft.nl}\\
   $^*$ Equal Contribution
}
\begin{document}
\AddToShipoutPictureBG*{%
  \AtPageUpperLeft{%
    \hspace{19cm}%
    \raisebox{-1.5cm}{%
      \makebox[0pt][r]{Published in the Proceedings of the 36th International Conference on Neural Information Processing Systems (NeurIPS 2022) }}}}

\maketitle

\begin{abstract}

    Neural Networks (NNs) have been successfully employed to represent the state evolution of complex dynamical systems.  Such models, referred to as NN dynamic models (NNDMs), use iterative noisy predictions of NN to estimate a distribution of system trajectories over time. Despite their accuracy, safety analysis of NNDMs is known to be a challenging problem and remains largely unexplored.  To address this issue, in this paper, we introduce a method of providing safety guarantees for NNDMs.  Our approach is based on stochastic barrier functions, whose relation with safety are analogous to that of Lyapunov functions with stability.  We first show a method of synthesizing stochastic barrier functions for NNDMs via a convex optimization problem, which in turn provides a lower bound on the system's safety probability.  A key step in our method is the employment of the recent convex approximation results for NNs to find piece-wise linear bounds, which allow the formulation of the barrier function synthesis problem as a sum-of-squares optimization program.  If the obtained safety probability is above the desired threshold, the system is certified.  Otherwise, we introduce a method of generating controls for the system that robustly maximizes the safety probability in a minimally-invasive manner.  We exploit the convexity property of the barrier function to formulate the optimal control synthesis problem as a linear program.  Experimental results illustrate the efficacy of the method. Namely, they show that the method can scale to multi-dimensional NNDMs with multiple layers and hundreds of neurons per layer, and that the controller can significantly improve the safety probability.
    

\end{abstract}

\section{Introduction}
\label{sec:intro}

Safety is a major concern for autonomous dynamical systems, especially in \emph{safety-critical} applications.  Examples include UAVs, autonomous cars, and surgical robots. 
To this end, various methods are developed to provide safety guarantees for such systems \cite{baier2008principles,lahijanian2011temporal,laurenti2020formal}.  Most of these methods, however, assume (simple) analytical models, which are often unavailable or too expensive to obtain due to complexity (nonlinearities) in real-world systems \cite{belta2019formal}. 
This has given rise to recent efforts to harvest the representational power of \textit{neural networks} (NNs) to predict how the state of a system evolves over time, referred to as \textit{NN dynamic models} (NNDMs) \cite{raissi2019physics, gamboa2017deep}. 
Despite their efficient and accurate representation of realistic dynamics, 
it has been difficult to make formal claims about the behavior of NNDMs over time. This poses a major challenge in safety-critical domains: \textit{how to provide safety guarantees for NNDMs?}


In this paper, we propose a framework to provide safety certificates for NNDMs.  
The approach is based on stochastic barrier functions \cite{prajna2007framework,santoyo2021barrier}, whose role for safety is analogous to that of Lyapunov functions for stability, and which can guarantee forward invariance of a safe set.
Specifically, we show a method of synthesizing barrier functions for NNDMs via an efficient convex optimization method, which in turn provides a lower bound on the safety probability of the system trajectories.  
The benefit of our approach is that it does not require unfolding of the  NNDM to compute this probability; instead, we base our reasoning on non-negative super-martingale inequalities to bound the probability of leaving the safe set. The key to this is to utilize the recent convex approximation results for NNs to obtain piece-wise linear bounds, and a formulation of the optimization problem that enables the use of efficient solvers.  If the obtained safety probability is above a desired threshold, the system can be certified.  For the cases that such safety certificates cannot be provided, 
we introduce a method of generating controls for the system, which robustly maximizes the safety probability in a minimally-invasive manner.  We exploit the convexity property of the barrier function to formulate the control synthesis problem as a linear program.  Experimental results on various NNDMs trained using state-of-the-art RL techniques \cite{nagabandi2018neural, chen2018neural, chua2018deep}  illustrate the efficacy of the method. 
 Namely, they show that the method can scale to multi-dimensional models with various layers and hundreds of neurons per layer, and that the controller can significantly improve the safety probability.
 


%
In summary, this paper makes the following main  contributions:
(i) a method of computing a lower bound for the safety probability of NNDMs via stochastic barrier functions and efficient nonlinear optimization, thereby providing safety guarantees, (ii) a correct-by-construction control synthesis framework for NNDMs to ensure safety while being \emph{minimally-invasive}, i.e., intervening only when needed, (iii) demonstration of the efficacy and scalability of the framework in benchmarks on various NNDMs with multiple  architectures, i.e., multiple hidden layers and hundreds of neurons per layer. 


\paragraph{Related Work}
In recent years, NNs have been successfully employed to model complex dynamical systems \cite{raissi2019physics, gamboa2017deep}.
Such models use iterative noisy predictions of an NN to estimate the distribution of the system over time. 
They provide means of predicting state evolution of the systems that have no analytical form, e.g., soft robotics \cite{chin2020machine}.
In addition, because of their ability to quickly unfold trajectories, 
 NNDMs have been used to enhance controller training in Reinforcement Learning (RL) frameworks \cite{nagabandi2018neural,chua2018deep}.  In these works, specifically, a NNDM of the system is trained in closed-loop with an NN controller, which can be concatenated in a
single NN representing the dynamics of the closed-loop system.  
Despite their popularity and increased usage, little work has gone to safety analysis of NNDMs. This paper aims to fill this gap.

Barrier functions provide a formal methodology to prove safety of dynamical systems \citep{amescooganbarrier19, prajna2004safety}.
The state-of-the-art to find stochastic barrier functions is arguably to rely on Sum-of-Squares (SOS) optimization \cite{santoyo2021barrier}.
Barrier certificates obtained via SOS optimization have been studied for deterministic, uncontrolled  nonlinear systems \citep{prajna2007safety}, and control-affine stochastic systems \citep{ames2014control, santoyo2021barrier}, as well as for hybrid models \citep{amescooganbarrier19, prajna2004safety}. 
In addition, the control synthesis formulation in this paper relaxes the limitations of the existing SOS methods, which generally leads to a non-convex optimization problem  \cite{santoyo2021barrier}, by reducing it to a simple linear program.

Many recent studies have focused on the verification of NNs. However, most of them only aim to provide certification guarantees against adversarial examples \citep{goodfellow2014explaining}, with approaches ranging from SMT \citep{katz2017reluplex}, and convex relaxations \citep{zhang2018efficient} to linear programming \citep{ehlers2017formal}.
Such methods, because of their local nature, cannot be directly employed to perform safety analysis on the entire state space and over the trajectories as required for NNDMs.
{Another line of work focuses on infinite time horizon safety of Bayesian neural networks \cite{lechner2021infinite}, or the verification of neural ODEs with stochastic guarantees \cite{Gruenbacher2021OnTV}. 
They consider neural network controllers with known dynamics \citep{dawson2022safe}.
}
Other methods have been employed for trajectory-level properties of NNs based on  solution of the Bellman equations \citep{wicker2021certification}, formal (finite) abstractions \citep{adams2022formal}, and
mixed integer programming
\citep{wei2021safe}, which do not support noisy dynamics. In contrast, our approach supports additive noise and only requires to check static properties on the NN by showing that there exists a function whose composition with the NN forms a stochastic barrier function.

\section{Problem Formulation}
\label{sec:ProbForm}

In this work, we are interested in providing safety guarantees for dynamical systems that are described by NNs.   
Specifically, we consider 
the following discrete-time stochastic process whose time evolution is given by iterative predictions of an NN
\begin{align}
\label{Eqn:SystemEqn}
  \pX_{k+1} = \NN(\pX_{k}) +\pV_k , \qquad k\in \naturals,\quad \pX_{k},\pV_k \in \reals^n, 
\end{align}
where $\NN$ is a trained, fully-connected, feed-forward NN with  general continuous activation functions,
and $w$ represents the maximum likelihood weights. 
Term $\pV_k$ is a random variable modelling an additive noise of stationary Gaussian distribution with zero mean and covariance matrix $\cov \in \reals^{n \times n}$, i.e., the probability density function of $\pV_k$ is  $\N(\bar{x} \mid 0,\cov)$.

\begin{remark}
    \label{remark:NNmodel}
    System~\eqref{Eqn:SystemEqn} can be viewed as a closed-loop system, where $\NN$ is the composition of a NNDM (that represents the open-loop system) with an NN controller. 
    NNDMs such as the one in System~\eqref{Eqn:SystemEqn} are increasingly employed in robotics for both model representation and NN controller training.
    These models are often obtained by state-of-the-art model-based RL techniques (e.g., \citep{nagabandi2018neural, chen2018neural, chua2018deep}) or perturbations (step response) of the actual system 
    (e.g., \citep{chin2020machine}). 
\end{remark}

The evolution of System~\eqref{Eqn:SystemEqn} can be characterized by the
predictive posterior distribution ${\pdfx (\bar{x} \mid x)}$, which describes the probability density of $\pX_{k+1}$ when $\pX_k = x \in \reals^n$. 
This distribution is induced by noise $\pV_k$ and hence is Gaussian with mean $\NN(x)$ and covariance $\cov$, i.e., ${
    \pdfx (\bar{x} \mid x) = \N(\bar{x} \mid \NN(x),\cov).}$
For a subset of states $X \subseteq \reals^n$ and (initial) state $x\in \reals^n$, 
we call 
${T(X \mid x) = \int_X \pdfx (\bar{x} \mid x) d \bar{x}}$ 
the \emph{stochastic kernel} of System \eqref{Eqn:SystemEqn}. 
From the definition of $T$ it follows that, for a given initial condition $\pX_0 = x_0 \in \reals^n$, $\pX_k$  is a Markov process with a well-defined probability measure $\Pr$ uniquely generated by the stochastic kernel $T$ \citep[Proposition 7.45]{bertsekas2004stochastic}. 
For $X_0,X_k \subseteq \reals^n$, this probability measure is inductively defined as
\begin{align*}
    \Pr[\pX_0\in X_0] =  \mathbf{1}_{X_0}(x_0), \qquad
    \Pr[\pX_k\in X_k   \mid \pX_{k-1}=x] =  T^{}(X_k \mid x),
\end{align*}
where 
indicator function $\mathbf{1}_{X_0}(x_0) = 1$  if $x_0 \in X_0$, otherwise 0.
The definition of $\Pr$ allows one to make probabilistic statements over the trajectories of System \eqref{Eqn:SystemEqn}. 
In this work, we are particularly interested in two main problems: safety certification and safe controller synthesis for System~\eqref{Eqn:SystemEqn}.

\begin{problem}[Safety Certificate]
\label{Prob:Verification}
Let $X_\safe \subset \reals^n$ and 
$X_0 \subseteq X_\safe$ be basic closed semi-algebraic sets representing respectively the safe set and the set of initial states. 
Denote the probability that the system initialized in $X_0$ remains in the safe set in the next $N \in \naturalszero$ steps by
\begin{align}
    \label{eq: safety prob}
    P_\safe(X_\safe,X_0,N ) =  Pr[\forall x_0 \in X_0, \forall k\leq N, \pX_k \in X_\safe].    
\end{align}
Then, given threshold $\delta_\safe \in [0,1]$, provide a safety certificate for System~\eqref{Eqn:SystemEqn} by proving that 
$P_\safe(X_\safe,X_0,N ) \geq \delta_\safe.$
\end{problem}
 Problem \ref{Prob:Verification} seeks to compute the probability that $\pX_k$ remains within a safe set, e.g., it avoids obstacles or any undesirable states. Computation of this probability is particularly challenging because System~\eqref{Eqn:SystemEqn} is stochastic and NN ($\NN$) is generally a non-convex function. The assumption that $X_\safe $ and $X_0$ are  basic closed semi-algebraic sets is not limiting, as these sets are defined as an intersection of a finite number of polynomial inequalities and, for instance, include convex polytopes and spectrahedra \cite{marshall2008positive}.

In the case that a safety certificate cannot be generated, we are interested in 
synthesizing a controller that \emph{by-design} guarantees safety.  To this end, we extend System \eqref{Eqn:SystemEqn} with an affine input
\begin{align}
\label{Eqn:Extended Equations}
  & \pX_{k+1} = \NN(\pX_{k}) + g \pU_{k} + \pV_k,
  \end{align}
where $g:\reals^c\to \reals^n$ is given, $\pU_k \in U \subset \reals^c$ is a control action, and $U$ is bounded. 
Our goal is to synthesize a feedback controller $\pi: \reals^n \to U $ such that  $\pU_k = \pi(\pX_k)$ guarantees safety of System~\eqref{Eqn:Extended Equations}.

\begin{problem}[Safe Control Synthesis]
    \label{Prob:syntesis}
    Let $X_\safe \subset \reals^n$ and 
$X_0 \subseteq X_\safe$ be basic closed semi-algebraic sets representing respectively the safe set and the set of initial states, and $N\in \naturalszero$ a time horizon. Then, given threshold $\delta_\safe$, synthesize a feedback controller $\pi$ such that $
        P_\safe(X_\safe, X_0, N, \pi) \geq \delta_\safe,$     where $P_\safe(X_\safe, X_0, N, \pi)$ is the safety probability obtained for control-affine System \eqref{Eqn:Extended Equations}. 
\end{problem}

\begin{remark}
    \label{remark:minInvasive}
    There may exist many possible controllers that are solutions to Problem \ref{Prob:syntesis}. Since System~\eqref{Eqn:SystemEqn} represents a closed-loop system with a given controller, one may want to specifically synthesize a safety controller that is ``minimally-invasive,'' i.e., it minimally intervenes to guarantee safety.
    In Section \ref{sec:CBF}, we elaborate on this notion and introduce methods of generating such controllers.
\end{remark}



\paragraph{Overall Approach}
Our approaches to Problems \ref{Prob:Verification} and \ref{Prob:syntesis} are based on (stochastic) barrier functions.  Specifically, we aim to synthesize barrier functions that can serve as a safety certificate by providing a lower bound for $P_\safe$.  To achieve this, we rely on recent results for local convex relaxation of an NN to find piece-wise linear over- and under-approximations of $\NN$. This allows us to formulate the problem of finding a stochastic barrier function for System \eqref{Eqn:SystemEqn} as an SOS optimization problem. 

Our (minimally-invasive) control synthesis method (Problem \ref{Prob:syntesis}) focuses on injecting controls to System \eqref{Eqn:Extended Equations} only in the regions of the state space that potentially contribute to unsafety of the system.  
To this end, we exploit two main properties of the barrier function: 
the required (super-) martingale property and the convexity property of SOS polynomials.
The former allows us to design a method of identifying the ``unsafe'' regions, and the latter enables us to
formulate a Linear Programming (LP) optimization problem to find controllers that robustly minimize the probability of unsafety. 

\begin{remark}
    {
    For ease of presentation, $g$ in (3) is taken to be a constant, however, we note that the proposed approach can also handle $g$ being a continuous function of $x$, or even a NN. For the latter, we introduce an interval bound propagation approach in Section \ref{sec:crown} that allows the computation of upper and lower bounds on $g(x)$, which in turn can be incorporated in the LP optimization formulation in Section \ref{sec:CBF} for control synthesis (Problem \ref{Prob:syntesis}) in a straightforward manner.
    } 
\end{remark}


\section{Preliminaries}
\label{sec:backgoundNN}

In this section, we provide a brief background on stochastic barrier functions and non-negative polynomials, which are central components of our framework.

\subsection{Stochastic Barrier Functions }
\label{sec:barrier}

Consider stochastic process 
$\pX_{k+1}=F(\pX_k,\pV_k),$
where $\pX_k \in X\subseteq \reals^n$, $\pV_k \in V \subseteq \reals^m$ is a well-defined stochastic process, and $F: X \times V \to X$ is a locally Lipschitz continuous function. 
Then, given initial set $X_0 \subset X$, safe set $ X_\safe \subseteq X$, and unsafe set $X_\unsafe \subset X$, a twice differentiable function $B:X\to \reals$ is called a \emph{stochastic barrier function} or simply barrier function if the following conditions hold for $\alpha\geq 1$ and $\beta,\eta \in [0,1]$: 
\begin{subequations}
        \begin{align}
            & B(x)\geq 0   \hspace{41.5mm} \forall x\in X    \label{Eq:Cond3}\\ 
            & B(x)\leq \eta  \hspace{41.5mm} \forall x\in X_{0}             \label{Eq:Cond2}
            \\    & B(x)\geq 1    \hspace{41.5mm} \forall  x\in X_\unsafe  \label{Eq:Cond4}\\
            & E[B( F(x,v)) \mid x]\leq B(x)/ \alpha +\beta    \qquad   \forall x\in X_\safe  \label{Eq:Cond5}
        \end{align}
\end{subequations}
If such a $B$ exists, then for any $N\in \naturalszero$, it follows, assuming $\alpha=1$ without loss of generality, that
\begin{align}
    \Pr [\forall k\leq N, \; \pX_k \in  X_\safe \mid \pX_0 \in X_0]\geq 1-( \eta + \beta N). \label{eq:probabilityBarrierFunctions}
\end{align}
Intuitively, Conditions \eqref{Eq:Cond3}-\eqref{Eq:Cond5} allow one to use non-negative super-martingale inequalities to bound the probability that the system reaches the unsafe region. This results in the probability bound in \eqref{eq:probabilityBarrierFunctions} 
(see \citep[Chapter 3, Theorem 3]{kushner1967stochastic}). A major benefit of using barrier functions for safety analysis is that while Conditions \eqref{Eq:Cond2}-\eqref{Eq:Cond5} are static properties, they allow one to make probabilistic 
statements on the time evolution of the system without the need to evolve the system over time.

\subsection{Non-negative Polynomials}
\label{sec:SOS}

Synthesis of barrier functions can be formulated as a nonlinear optimization problem.  To solve it efficiently via convex programming, we formulate the problem using SOS polynomials, which is a sufficient condition for polynomial non-negativity \citep{HilbertUeberDD}. This formulation allows the use of efficient semidefinite programming solvers, which have polynomial worst-case complexity \citep{vandenberghe1996semidefinite}.

\begin{definition}[SOS Polynomial]
    \label{def:sosp1}
    A multivariate polynomial $\lambda(x)$ is  a Sum-Of-Squares (SOS) for $x \in \mathbb{R}^{n}$ if there exists some polynomials $\lambda_{i}$, $i = 1, \ldots, r$, for some ${r \in \naturals}$, such that
    $\lambda(x) = \sum_{i=1}^{r} \lambda_{i}^{2}(x).$
    If $\lambda(x)$ is an SOS, then $\lambda(x) \geq 0$ for all $x \in \mathbb{R}^{n}$. The set of all SOS polynomials is denoted by $\Lambda$.
\end{definition}

Now, consider a basic closed semi-algebraic set $X =\set{x \in \reals^n \mid h_i(x) \geq 0  \;\; \forall i \in \set{1, \ldots, l}}$, which is defined as the intersection of $l$ polynomial ($h_i(x)$) inequalities. Then, Proposition \ref{prop:sosp1} allows one to enforce non-negativity
on these semi-algebraic sets. 
\begin{proposition}[\citep{nie2007complexity}]
    \label{prop:sosp1}
    Let $X \subset \reals^n$ be a compact basic semi-algebraic set defined as
    $X = \set{x \in \reals^n \mid h_i(x) \geq 0  \;\; \forall i \in \set{1, \ldots, l}},$
    where $h_i(x)$ is a polynomial. Then,  polynomial $\gamma(x)$ is non-negative on $X$ if,   for some $ \lambda_{0}(x), \lambda_{i}(x) \in \Lambda$,
    $\gamma (x) = \lambda_0(x) + \sum_{i=1}^{l} \lambda_{i}(x) h_{i}(x).$
\end{proposition}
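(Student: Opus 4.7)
The plan is to prove the stated sufficient direction of Putinar's certificate by a straightforward pointwise evaluation. The proposition as stated here is the ``easy direction'' of the result: it asserts that the algebraic representation $\gamma(x)=\lambda_0(x)+\sum_{i=1}^l\lambda_i(x)h_i(x)$ with SOS multipliers is \emph{sufficient} for non-negativity of $\gamma$ on $X$. The deeper converse, namely that every strictly positive polynomial on a compact archimedean semi-algebraic set admits such a representation, is what makes Putinar's theorem powerful, but it is not what is being claimed here, so no archimedean condition, quadratic module, or separation-theorem machinery needs to be invoked.

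The steps I would carry out are as follows. First, fix an arbitrary $x\in X$ and substitute it into the hypothesized identity $\gamma(x)=\lambda_0(x)+\sum_{i=1}^l\lambda_i(x)h_i(x)$. Second, invoke Definition \ref{def:sosp1}: every SOS polynomial can be written as a finite sum of squares of real polynomials, and is therefore non-negative on all of $\reals^n$; in particular $\lambda_0(x)\ge 0$ and $\lambda_i(x)\ge 0$ for each $i\in\{1,\dots,l\}$. Third, by the definition of $X$, the membership $x\in X$ gives $h_i(x)\ge 0$ for each $i$. Fourth, conclude that each product $\lambda_i(x)h_i(x)$ is a product of two non-negative reals, hence non-negative, and so $\gamma(x)$, being a sum of non-negative terms, satisfies $\gamma(x)\ge 0$. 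Since $x\in X$ was arbitrary, $\gamma$ is non-negative on $X$, as required. The compactness hypothesis on $X$ plays no role in this direction of the argument and can in fact be dropped here.

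I do not anticipate any genuine obstacle: the whole proof is a one-line chain of sign checks once Definition \ref{def:sosp1} is applied. The subtlety of Putinar's theorem lies entirely in the other direction. The practical value of the statement for our purposes is that it replaces the semi-infinite constraint ``$\gamma(x)\ge 0$ for all $x\in X$'' by the polynomial identity $\gamma=\lambda_0+\sum_i\lambda_i h_i$, whose unknown SOS multipliers $\lambda_0,\lambda_i$ can be searched for via semidefinite programming; this is precisely the mechanism that will be used in the subsequent sections to cast the barrier-function synthesis conditions \eqref{Eq:Cond3}--\eqref{Eq:Cond5} as a tractable SOS program.
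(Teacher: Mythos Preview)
Your argument is correct: the proposition as stated is only the sufficiency direction, and the pointwise sign check using Definition~\ref{def:sosp1} and the defining inequalities of $X$ is exactly the right (and only) thing to do. You are also right that compactness is irrelevant for this direction.

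Note, however, that the paper does not actually supply its own proof of Proposition~\ref{prop:sosp1}: it is stated as a cited result from the literature \cite{Stengle1974} and used as a black box to justify Corollary~\ref{cor:sosp1} and the SOS constraints in Theorem~\ref{th:sosp_bounds_nn}. So there is nothing to compare against---your write-up simply fills in the standard elementary verification that the paper omits.
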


\begin{corollary}
    \label{cor:sosp1}
    Let $\gamma(x)$ be a non-negative polynomial on $X = \set{x \in \reals^n \mid h_i(x) \geq 0  \;\; \forall i \in \set{1, \ldots, l}},$ and $h(x)$ be an $l$-dimensional vector of polynomials with $h_i(x)$ as its $i$-th dimension for all ${i \in \set{1, \ldots, l}}$. 
    Further, let $\L$ be an $l$-dimensional vector of SOS polynomials, i.e., the $i$-th element of $\L$ is $\lambda_i(x) \in \Lambda$.
    Then, it holds that $  \gamma(x) - \mathcal{L}^{T} h(x) \in \Lambda$.
\end{corollary}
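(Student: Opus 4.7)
The plan is to view this corollary as a direct rearrangement of Putinar's Certificate (Proposition \ref{prop:sosp1}). First I would apply the proposition to $\gamma(x)$, which is non-negative on the compact basic semi-algebraic set $X$; this yields the existence of SOS polynomials $\lambda_0(x), \lambda_1(x), \ldots, \lambda_l(x) \in \Lambda$ such that
\begin{equation*}
    \gamma(x) = \lambda_0(x) + \sum_{i=1}^{l} \lambda_i(x) h_i(x).
\end{equation*}

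Next I would identify the vector $\mathcal{L}$ in the statement of the corollary with the tuple $(\lambda_1(x), \ldots, \lambda_l(x))^T$ produced by the proposition, so that by definition of the inner product $\mathcal{L}^T h(x) = \sum_{i=1}^{l} \lambda_i(x) h_i(x)$. Rearranging the Putinar decomposition then gives
\begin{equation*}
    \gamma(x) - \mathcal{L}^T h(x) = \lambda_0(x),
\end{equation*}
and since $\lambda_0(x) \in \Lambda$ by construction, the claim $\gamma(x) - \mathcal{L}^T h(x) \in \Lambda$ follows immediately.

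There is essentially no obstacle here; the corollary is a notational repackaging of Putinar's Certificate that is convenient for the SOS optimization program to come, because it isolates the $\lambda_0$ term as an SOS constraint on $\gamma(x) - \mathcal{L}^T h(x)$ rather than on the full positivstellensatz-style identity. The only subtlety worth flagging is that, strictly speaking, the corollary is an existence statement: it asserts that there exists a choice of $\mathcal{L}$ for which the conclusion holds (namely the one furnished by Proposition \ref{prop:sosp1}), not that every vector of SOS polynomials $\mathcal{L}$ satisfies it. In the downstream optimization this is exactly the right reading, since $\mathcal{L}$ will be introduced as a decision variable to be chosen.
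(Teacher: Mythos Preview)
Your proposal is correct and is exactly the intended argument: the paper states Corollary~\ref{cor:sosp1} without proof as an immediate rearrangement of Proposition~\ref{prop:sosp1}, and your derivation---apply Putinar, collect $(\lambda_1,\ldots,\lambda_l)$ into $\mathcal{L}$, and observe $\gamma(x)-\mathcal{L}^T h(x)=\lambda_0(x)\in\Lambda$---is precisely that rearrangement. Your closing remark about the existential reading of $\mathcal{L}$ is also the right interpretation and matches how the corollary is used in Theorem~\ref{th:sosp_bounds_nn}, where the $\mathcal{L}$ vectors are decision variables.
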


\section{Barrier Functions for Neural Network Dynamical Models}
\label{ref:Verification}

In this section, we introduce a formulation for synthesizing a barrier function that guarantees safety of System \eqref{Eqn:SystemEqn}.
Our approach relies on the local convex relaxations of $\NN$, 
which we use to formulate an SOS optimization problem that generates a valid barrier function for System  \eqref{Eqn:SystemEqn}.

\subsection{Local Convex Relaxation of NNDMs}
\label{sec:crown}
Our approach uses a convex relaxation of $\NN$.  Specifically, we utilize the recent relaxation results for NNs \citep{zhang2018efficient} to compute piece-wise linear functions that under- and over-approximate $\NN$.  
The advantage of this method is that it is computationally efficient and produces tight bounds for small neighborhoods.
To this end, we first partition $X_\safe$ to a finite set of regions $Q=\{q_1,....,q_{|Q|} \}$, where $q_i \subseteq X_\safe$ for all $i \in \set{1,\ldots,|Q|}$ is a basic (closed) semi-algebraic set.  Such a decomposition can be obtained using, e.g., a grid or half-spaces.
Then, for every $q \in Q$, we compute linear bounds
\begin{equation}
\label{eqn:UpperandLowerBound}
     \low{f}_q(x)= \low{A}_q x + \low{b}_q, \quad \up{f}_q(x)= \up{A}_q x + \up{b}_q
    \quad \text{such that} \quad \forall x \in q, \quad \low{f}_q(x) \leq \NN(x)\leq \up{f}_q(x).
\end{equation}
The above linear functions can be produced in several ways. One method is to forward propagate the bounds in the input domain of $\NN(x)$ using Interval Bound Propagation (IBP) based approaches \cite{wang2018efficient, wang2018formal}. These approaches are efficient but generally lead to loose approximations.  Another method is Linear Relaxation based Perturbation Analysis (LiRPA) algorithms, which perform a symbolic back propagation followed by a forward evaluation of the bounds \cite{wang2021, Xu2020, zhang2018efficient}.  Such LiRPA-based algorithms produce tighter bounds but can be slow.
In our implementation, we use the off-the-shelf toolbox \cite{Xu2020} to compute $\up{f}_q(x)$ and $\low{f}_q(x)$ in  \eqref{eqn:UpperandLowerBound} for its trade-off between speed and accuracy.

\subsection{Barrier Function (Certificate) Synthesis}
\label{sec:certificate}
Recall that a valid barrier function $B(x)$ must satisfy Conditions \eqref{Eq:Cond2}-\eqref{Eq:Cond5}.  With the above 
local linear bounds for $\NN$, Condition \eqref{Eq:Cond5} can be redefined over the partitioned regions in $Q$.  That is, Condition \eqref{Eq:Cond5} can be replaced with the following constraints: for each $q \in Q$, 
\begin{equation}
    \label{Eq:Cond7}
    E[ B( \NN_q(x) +\mathbf{v}) \mid x] \leq B(x)/\alpha + \beta \qquad \text{and} \qquad \low{f}_q(x) \leq \NN_q(x) \leq \up{f}_q(x)    \qquad   \forall x\in q.
\end{equation}
In this formulation, $\NN_q(x)$ is a (free) variable that is bounded by two linear functions,  
avoiding the non-convexity of $\NN_q(x)$. Consequently, an efficient evaluation
of the constraint becomes feasible.

\begin{remark}
    Note that there are in fact three constraints in \eqref{Eq:Cond7} (one super-martingale condition on $B$ and two bounds on $\NN_q(x)$). If they are treated separately, it leads to a conservative barrier function because each constraint needs to hold for all $x \in q$ separately.  A less conservative approach is to pose all constraints for the same $x$ at the same time.  In Theorem \ref{th:sosp_bounds_nn}, we show how this can be achieved.
\end{remark}

As commonly practiced in the literature, e.g., \citep{prajna2007safety, santoyo2021barrier, steinhardt2012finite}, we restrict our search for a $B(x)$ to the class of polynomial functions.  Then, the degree of the polynomial becomes a design parameter.  In the following lemma, we show that the degree of the polynomial must be at least $2$ for the polynomial to be a valid candidate barrier function for System~\eqref{Eqn:SystemEqn}.  That is, there does not exist a linear function satisfying Conditions \eqref{Eq:Cond2}-\eqref{Eq:Cond5} for $\eta<1$, which implies that the lower bound in \eqref{eq:probabilityBarrierFunctions} is trivially $0.$
\begin{lemma}
    \label{lemma:barrier degree}
    For $k\in \mathbb{R}^n$ and $c\in \mathbb{R}$, consider linear function $B(x)= k^T x + c$. Then, $B(x)$ satisfies Conditions \eqref{Eq:Cond2}-\eqref{Eq:Cond5} iff $k=0$ and $c=1.$
\end{lemma}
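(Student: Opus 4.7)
The plan is to prove both directions of the biconditional, with the forward direction being the substantive one. The key observation is that, because System~\eqref{Eqn:SystemEqn} is driven by Gaussian noise $\pV_k$ with full support on $\reals^n$, the reachable state space is unbounded; accordingly, the ambient $X$ in Conditions \eqref{Eq:Cond3}--\eqref{Eq:Cond5} should be taken as $X=\reals^n$, and in particular Condition \eqref{Eq:Cond3} must hold for every $x\in\reals^n$.

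For the forward direction, I would first use Condition \eqref{Eq:Cond3} to force $k=0$. If $k\neq 0$, evaluating $B(x)=k^{T}x+c$ along the ray $x=-tk$ for $t>0$ yields $B(-tk)=-t\|k\|^{2}+c$, which tends to $-\infty$ as $t\to\infty$, contradicting non-negativity of $B$ on all of $\reals^n$. Hence $B\equiv c$ is constant. Condition \eqref{Eq:Cond4} then gives $c\geq 1$, while Condition \eqref{Eq:Cond2} gives $c\leq\eta\leq 1$. Combining these inequalities forces $c=1$ (and, as a by-product, $\eta=1$).

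For the reverse direction, setting $B\equiv 1$, Conditions \eqref{Eq:Cond3} and \eqref{Eq:Cond4} are immediate, Condition \eqref{Eq:Cond2} holds with $\eta=1$, and the super-martingale Condition \eqref{Eq:Cond5} reduces to $1\leq 1/\alpha+\beta$, which is satisfied by choosing, e.g., $\alpha=1$ and $\beta=0$. This closes the iff.

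The argument is essentially one observation and a calculation; the only subtle point is to justify that Condition \eqref{Eq:Cond3} is required to hold on all of $\reals^n$ rather than on some compact subset, since on a compact set a non-constant linear function can certainly be non-negative. This is exactly where the unbounded support of the Gaussian noise is used. The payoff, as the surrounding text emphasizes, is that any non-trivial barrier function for an NNDM with additive Gaussian noise must have polynomial degree at least two, since a degree-one candidate can only ever yield the trivial bound $1-(\eta+\beta N)\leq 0$ in \eqref{eq:probabilityBarrierFunctions}.
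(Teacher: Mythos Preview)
Your proof is correct and follows essentially the same approach as the paper: a case split on $k=0$ versus $k\neq 0$, using non-negativity on all of $\reals^n$ to rule out the latter, and then combining $c\geq 1$ from the unsafe-set condition with $c\leq\eta\leq 1$ from the initial-set condition to pin down $c=1$. Your version is in fact more explicit than the paper's, which simply asserts that for $k\neq 0$ ``there always exists $x'\in\reals^n$ such that $B(x')<0$'' without naming the ray $x=-tk$, and which compresses the $k=0$ case to the single line ``satisfies Conditions \eqref{Eq:Cond2}--\eqref{Eq:Cond5} iff $c=\eta=1$''; your added remark about the full support of the Gaussian noise justifying $X=\reals^n$ is a point the paper leaves implicit.
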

\begin{proof}
    There are two cases to consider: (i) $k=0$: in this case $B(x)=c$, which satisfies Conditions \eqref{Eq:Cond2}-\eqref{Eq:Cond5} iff $c=\eta=1$. (ii) $k\neq 0$: in this case, due to the linearity of $B(x)$, there always exists $x'\in \mathbb{R}^n$ such that $B(x')<0$, violating Condition \eqref{Eq:Cond3}. 
\end{proof}

Having established that the polynomial degree must be at least 2, we further restrict $B(x)$ to SOS polynomials.  
Theorem \ref{th:sosp_bounds_nn} shows a formulation of the barrier function synthesis problem as an SOS program, which can be solved in polynomial time. 
In particular, the theorem shows how to formulate Conditions \eqref{Eq:Cond2}-\eqref{Eq:Cond4} and \eqref{Eq:Cond7} as SOS constraints.

\begin{theorem}[NNDM Barrier Certificate]
    \label{th:sosp_bounds_nn}
    Consider SOS polynomial function $B(x)$, and
    safe set $ X_\safe = \set{  x \in \reals^{n} \mid h_{\safe}(x) \geq 0}$, 
    initial set $ X_{\initial} = \set{  x \in \reals^{n} \mid h_{\initial}(x) \geq 0}$, 
    unsafe set $X_{\unsafe} = \reals^n \setminus X_\safe = \set{  x \in \reals^{n} \mid h_{{\unsafe}}(x) \geq 0}$, and
    partition region $q= \set{  x \in \reals^{n} \mid h_{{q}}(x) \geq 0 }$ for all $q \in Q$.
    Let $\mathcal{L}_{\safe}(x)$, $\mathcal{L}_{\initial}(x)$ and $\mathcal{L}_{\unsafe}(x)$ be vectors of SOS polynomials with the same dimensions as $h_\safe$, $h_\initial$, and $h_\unsafe$, respectively. Likewise, let $\mathcal{L}_{q,x}(x)$  and $ \mathcal{L}_{q, y}(x)$ be vectors of SOS polynomials with the same dimension as $h_q$ and $\NN_q(x)$, respectively. 
   Then, a stochastic barrier certificate $B(x)$ for System~\eqref{Eqn:SystemEqn} with time horizon $N \in \naturals_{\geq 0}$ can be obtained by solving the following SOS optimization problem for $\eta,\beta \in [0,1]:$
    \begin{subequations}
    \begin{align}
        &\min_{\beta,\eta} \quad \eta + \beta  N \qquad
         \text{subject to:} \nonumber \\
        & \hspace{15mm} B(x)\in \Lambda, \label{constraint1} \\
        &  \hspace{10.5mm} - B(x) - \mathcal{L}^{T}_{\initial}(x)h_{\initial}(x) + \eta \in \Lambda, \label{constraint2}  \\
        & \hspace{15mm} B(x) - \mathcal{L}^{T}_{\unsafe}(x)h_{\unsafe}(x) -1 \in \Lambda, \label{constraint3} \\
                 & \hspace{10.5mm} -E[B( y + {v}) \mid x]  + B(x) /\alpha + \beta -  \mathcal{L}^{T}_{q,x}(x)h_{q}(x) - \nonumber \\
                 & \hspace{45mm} \mathcal{L}^{T}_{q, y}(x)
                 \left (
                 (\up{f}_q(x) - y)\odot(y - \low{f}_q(x))
                 \right )\in \Lambda \hspace{10mm}   \label{constraint4}      
        \forall q \in Q
    \end{align}
    \end{subequations}
    where $\odot$ denotes the element-wise Schur product. This
    guarantees safety probability
    $P_\safe(X_\safe,X_0,N ) \geq 1 - (\eta + \beta N)$. 
    
\end{theorem}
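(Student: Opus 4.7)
The plan is to show that each of the SOS constraints \eqref{constraint1}--\eqref{constraint4} acts as a Putinar-style certificate enforcing the corresponding stochastic barrier condition \eqref{Eq:Cond3}--\eqref{Eq:Cond5}, so that the probability bound \eqref{eq:probabilityBarrierFunctions} applies with the optimized $\eta$ and $\beta$ to give the claimed lower bound on $P_\safe$.

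The three ``static'' constraints are routine. Constraint \eqref{constraint1} asserts $B(x)\in\Lambda$, which by Definition~\ref{def:sosp1} yields $B(x)\geq 0$ on $\reals^n$, i.e.\ \eqref{Eq:Cond3}. Applying Putinar's certificate (Proposition~\ref{prop:sosp1}) to constraint \eqref{constraint2} gives $-B(x)+\eta\geq 0$ on the initial set $\{h_\initial(x)\geq 0\}$, hence $B(x)\leq\eta$ on $X_0$, i.e.\ \eqref{Eq:Cond2}; the analogous argument for \eqref{constraint3} gives $B(x)\geq 1$ on $X_\unsafe$, i.e.\ \eqref{Eq:Cond4}.

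The main step is constraint \eqref{constraint4}. I would introduce $y$ as an auxiliary vector of polynomial indeterminates and read the constraint as a Putinar certificate on the \emph{joint} basic semi-algebraic set in $(x,y)$ defined by $h_q(x)\geq 0$ together with the $n$ scalar inequalities $(\up{f}_q(x)_j - y_j)(y_j - \low{f}_q(x)_j)\geq 0$, one per coordinate $j$. Since the one-dimensional product $(a-y)(y-b)$ is non-negative iff $y\in[b,a]$, these scalar conditions jointly encode the box constraint $\low{f}_q(x)\leq y\leq \up{f}_q(x)$ componentwise. Proposition~\ref{prop:sosp1} then yields
\[
    E[B(y+v)\mid x]\leq B(x)/\alpha+\beta
    \quad \text{whenever}\ x\in q\ \text{and}\ \low{f}_q(x)\leq y\leq \up{f}_q(x).
\]
By the local linear relaxation \eqref{eqn:UpperandLowerBound}, the substitution $y=\NN(x)$ is feasible for every $x\in q$, which recovers $E[B(\NN(x)+v)\mid x]\leq B(x)/\alpha+\beta$ on $q$; since $\{q\}_{q\in Q}$ covers $X_\safe$, this is precisely \eqref{Eq:Cond5}.

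With all four barrier conditions established (with $\alpha=1$ as allowed by \eqref{eq:probabilityBarrierFunctions}), invoking the super-martingale bound of Section~\ref{sec:barrier} immediately yields $P_\safe(X_\safe,X_0,N)\geq 1-(\eta+\beta N)$. The hard part, and the step I would write out most carefully, is the joint-variable SOS argument for \eqref{constraint4}: treating $y$ as an independent polynomial indeterminate is sound precisely because the Putinar certificate produces an inequality valid for \emph{every} $y$ in the bounding box, so the non-polynomial substitution $y=\NN(x)$ is admissible only \emph{after} the certification step. This decoupling of $y$ from $\NN(x)$ is exactly what converts the non-convex condition \eqref{Eq:Cond5} into the tractable SOS program of the theorem.
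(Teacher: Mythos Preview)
Your proposal is correct and follows essentially the same route as the paper: Constraints \eqref{constraint1}--\eqref{constraint3} are handled via Definition~\ref{def:sosp1} and Proposition~\ref{prop:sosp1}/Corollary~\ref{cor:sosp1}, and \eqref{constraint4} is read as a Putinar certificate in the joint variables $(x,y)$ with the elementwise product encoding the box $\low{f}_q(x)\leq y\leq \up{f}_q(x)$, after which one substitutes $y=\NN(x)$. The one point the paper makes explicit that you leave implicit is that $E[B(y+v)\mid x]$ is in fact a polynomial in $(x,y)$, because expanding $B(y+v)$ and taking expectations leaves monomials in $y$ multiplied by Gaussian moments $E[v^d]$, which are constants; this is what makes \eqref{constraint4} a bona fide SOS constraint to which Proposition~\ref{prop:sosp1} can be applied.
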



\begin{proof}
It suffices to show that if $B$ satisfies Constraints \eqref{constraint1}-\eqref{constraint4}, then Conditions \eqref{Eq:Cond3}-\eqref{Eq:Cond5} are satisfied. 
As Constraint \eqref{constraint1} guarantees that $B$ is a SOS polynomial, hence non-negative by definition, Condition \eqref{Eq:Cond3} holds. By Corollary \ref{cor:sosp1} it holds that if Constraints \eqref{constraint2}-\eqref{constraint3} hold, then $B$ is respectively smaller that $\eta$ in $X_0$ and greater than $1$ in $X_u$. What is left to show is that Constraint \eqref{constraint4} guarantees the satisfaction of Condition \eqref{Eq:Cond5}. This can be done as follows. 
For every region $q \in Q$,
the expectation term in Condition \eqref{Eq:Cond5} can be expressed as $E[B(y + v) \mid x]$, where $y$ is bounded by under-approximation $\low{f}_q(x)$ and over-approximation $\up{f}_q(x)$ of $\NN(x)$ in System \eqref{Eqn:SystemEqn}  for all $x \in q$.  Then by Corollary \ref{cor:sosp1}, Constraint \eqref{constraint4} is obtained for each $q$.  Note that $B(x)$ is a polynomial, so if $E[B(y + v) \mid x]$ is also a polynomial, the condition is a valid SOS constraint.  Given that $y + v$ is linear, $E[B(y + v) \mid x]$ is a sum of monomials in terms of $y$ and expectation moments $\mathbb{E}[v^{d}]$, where $d \geq 0$. Since, random variable $v$ has a normal distribution, $\mathbb{E}[v^{d}]$ is  constant.  Therefore, $E[B(y + v) \mid x]$ is a polynomial only in terms of (components of) $y$. 
\end{proof}
\begin{remark}
\label{remark:globalexplicit}
    We note that Condition \eqref{constraint4} uses linear equations in \eqref{eqn:UpperandLowerBound} to bound $\NN(x)$ for all $x \in q$. Removing the dependence on $x$, by replacing the equations for $\up{f}_q(x)$ and $\low{f}_q(x)$ with their extreme values in region $q$, relaxes the optimization problem and results in a speed-up.  It is however an additional over-approximation, and hence the obtained safety probability bound is strictly smaller than the one generated by using the actual linear equations for $\up{f}_q(x)$ and $\low{f}_q(x)$.  This is the classical ``efficiency versus accuracy'' trade-off. In our implementation, we compare the two approaches. 
\end{remark}



\section{Minimally-invasive Barrier Controller Synthesis}
\label{sec:CBF}

Here, we focus on Problem~\ref{Prob:syntesis}.  The assumption is that a barrier function $B(x)$ is synthesized for System~\eqref{Eqn:SystemEqn} via Theorem~\ref{th:sosp_bounds_nn}, which has a safety probability that is lower than threshold $\delta_\safe$.  Hence, our goal is to design a feedback controller for System~\eqref{Eqn:Extended Equations} that guarantees safety.  Ideally, this controller is minimal in the interruptions it causes to the system evolution.  Below, we first provide a definition for  a minimally-invasive controller, and then show how it can be synthesized.

Recall that a feedback controller $\pi: \reals^n \to U$ is a function that assigns a control value to each state.  Let $X_\pi \subseteq X_\safe$ be the set of states, to which $\pi$ assigns non-zero control values, i.e., $\pi(x) \neq 0$ for all $x \in X_\pi$.  Further, we denote by $V(X)$ the volume of set $X \subset \reals^n$.  Then, we say a feedback controller $\pi$ is \textit{minimally-invasive} if it minimizes $V(X_\pi)$.  Moreover, we say $\pi$ is $\epsilon$-\textit{minimally-invasive} if $V(X_\pi) - \min_{\pi'} V(X_{\pi'}) \leq \epsilon$.  
In this work, we seek $\epsilon$-\textit{minimally-invasive} controller with an arbitrary small $\epsilon$.  To this end, we make the observation that if the discretization of $X_\safe$ discussed in Section~\ref{sec:crown} is uniform, $V(X_\pi)$ for a given $\pi$ is directly proportional to the number of discrete regions in $Q$ to which $\pi$ applies a non-zero control. We denote by $Q_\pi$ the set of such regions.  
Then, a feedback controller $\pi$ that minimizes $|Q_\pi|$ is $\epsilon$-\textit{minimally-invasive}, and as the discretization becomes finer, $\epsilon$ monotonically approaches zero. 
Therefore, given barrier function $B(x)$, our first goal is to identify the minimal number of regions in $Q$ that require a non-zero controller to guarantee safety.  Recall that, given $B(x)$, we require the safety probability to be $1-(\eta + \beta N) \geq \delta_\safe$.  While $\eta$ is related to the value of $B(x)$ in the initial set, $\beta$ is a compensation needed to turn a sub-martingale inequality  to a super-martingale inequality (Condition \eqref{constraint4} or \eqref{Eq:Cond5}).  We can compute this compensation for each region $q \in Q$ by evaluating Condition \eqref{constraint4} or \eqref{Eq:Cond5}.
Let $\beta_q$ denote the computed compensation term for region $q$.  Then, region $q$ requires a non-zero controller if $\beta_q > (1 - \delta_\safe - \eta)/N$.

We now turn our attention to designing a controller that reduces $\beta_q$.  In our approach, we tap directly into the convex nature of the SOS polynomial $B(x)$ (SOS-convex \footnote{The SOS-convexity property for polynomials can be enforced algorithmically in the SDP program, which is a sufficient condition for convexity of
polynomials \cite{Ahmadi2013SOSConvexLF}.}).
Lemma~\ref{LemmaControl} shows that a controller that drives System~\eqref{Eqn:Extended Equations} to a point closest to the minimum point of $B(x)$, also robustly minimizes $\beta_q$.

\begin{lemma}
    \label{LemmaControl}
    Given 
    $B(x)$, let 
    ${x^* = \arg \min_x B(x)}$, ${x' = \NN(x) + g u + v}$ as in System~\eqref{Eqn:Extended Equations}, and $\beta_q \geq E[B(x') \mid x] - B(x)$ for all $x$ in region $q \subseteq X_\safe$.
    Then, the controller that minimizes $\|\tilde{x}' -x^* \|_{1}$, where ${\tilde{x}' = \NN(x) + g u}$, also robustly minimizes $\beta_q$, i.e., minimizes an upper bound of $\beta_q$.
\end{lemma}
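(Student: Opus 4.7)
The plan is to establish an upper bound on $E[B(x') \mid x] - B(x)$ that, for each fixed $x$, depends on the control $u$ solely through $\|\tilde{x}' - x^*\|_1$ and is monotonically non-decreasing in this quantity. Since $\beta_q$ is required to dominate $E[B(x')\mid x]-B(x)$ uniformly over $q$, pointwise minimization of $\|\tilde{x}' - x^*\|_1$ will also minimize the resulting upper bound on $\beta_q$, which is precisely the ``robust'' sense claimed in the lemma.

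First, I would rewrite the expectation by substituting $x' = \tilde{x}' + v$ with $\tilde{x}' = \NN(x) + g u$ and $v \sim \N(0, \cov)$ zero-mean Gaussian. Expanding the polynomial $B$ about $\tilde{x}'$ and taking expectation annihilates every odd-order moment of $v$, leaving
\[ E\!\left[B(\tilde{x}' + v) \mid x\right] = B(\tilde{x}') + \tfrac{1}{2}\operatorname{tr}\!\left(\nabla^2 B(\tilde{x}')\,\cov\right) + M(\tilde{x}'), \]
where $M(\tilde{x}')$ collects the higher even-moment corrections. All the $v$-moments are constants determined by $\cov$, so the right-hand side is a polynomial in $\tilde{x}'$ alone and hence depends on $u$ only through $\tilde{x}'$.

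Second, I would exploit the SOS-convexity of $B$ to relate each piece of this polynomial to $\|\tilde{x}' - x^*\|_1$. Because $x^* = \arg\min_x B(x)$ satisfies $\nabla B(x^*) = 0$, a Taylor expansion of $B$ about $x^*$ combined with the positive-semidefinite (SOS) structure of $\nabla^2 B$ and the compactness of the reachable set of $\tilde{x}'$ yields
\[ B(\tilde{x}') - B(x^*) \;\leq\; \Psi\!\left(\|\tilde{x}' - x^*\|_1\right), \]
for some polynomial $\Psi$ with nonnegative coefficients. Since $\nabla^2 B$ is an SOS matrix polynomial, the trace term $\operatorname{tr}(\nabla^2 B(\tilde{x}')\cov)$ and the higher-moment correction $M(\tilde{x}')$ likewise admit upper bounds that are polynomials in $\|\tilde{x}' - x^*\|_1$ with nonnegative coefficients.

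Combining these estimates gives
\[ E[B(x') \mid x] - B(x) \;\leq\; \Phi\!\left(\|\tilde{x}' - x^*\|_1\right) + C(x), \]
where $\Phi$ is a monotonically non-decreasing polynomial and $C(x)$ absorbs every term independent of $u$. For each fixed $x\in q$, the right-hand side is minimized over $u$ exactly by the $u$ minimizing $\|\tilde{x}' - x^*\|_1$; taking the supremum over $x \in q$ then yields a minimized upper bound on $\beta_q$, establishing the lemma. The main obstacle is step two: producing a single $\Phi$ that simultaneously dominates $B(\tilde{x}') - B(x^*)$ and the moment corrections while remaining monotone in $\|\tilde{x}' - x^*\|_1$. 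For the quadratic case $B(x) = (x - x^*)^{\top} P (x - x^*) + B(x^*)$ this is immediate with $\Phi(t) = \lambda_{\max}(P)\, t^2 + \mathrm{const}$; for higher-degree SOS-convex $B$ it requires invoking polynomial growth bounds on the compact admissible set of $\tilde{x}'$ and checking that all resulting constants are independent of $u$.
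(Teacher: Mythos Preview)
Your decomposition $E[B(\tilde{x}'+v)\mid x] = B(\tilde{x}') + (\text{moment corrections})$ and the appeal to SOS-convexity are exactly the ingredients the paper uses, so the skeleton is right. The difference is in how the moment corrections are handled. The paper does not try to bound the Hessian trace and higher-moment terms by a monotone function of $\|\tilde{x}'-x^*\|_1$; instead it lumps them into a single Jensen's gap $\xi$ with $E[B(x')\mid x]=B(\tilde{x}')+\xi$ and, using boundedness of $x$ and $u$, replaces $\xi$ by a uniform constant $\bar{\xi}$. This kills all $u$-dependence in the correction in one stroke, after which only $B(\tilde{x}')$ remains to be minimized and SOS-convexity (monotone decrease of $B$ as $\tilde{x}'\to x^*$) finishes the argument via $\min_u\beta_q\leq \min_u\max_{x\in q}B(\tilde{x}')-\min_{x\in q}B(x)+\bar{\xi}$. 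Your route of tracking the $\|\tilde{x}'-x^*\|_1$ dependence through the trace and higher-moment pieces is unnecessary and, as you yourself flag, is the ``main obstacle'' in your plan. In particular, the step ``since $\nabla^2 B$ is an SOS matrix polynomial, the trace term admits a monotone bound in $\|\tilde{x}'-x^*\|_1$'' does not follow from SOS-convexity: $\operatorname{tr}(\nabla^2 B(\tilde{x}')R)$ is just some polynomial in $\tilde{x}'$ with no special relation to $x^*$, and you would end up invoking compactness to bound it by a constant anyway---at which point your argument collapses to the paper's. The paper's shortcut buys a two-line proof; your version, if completed, would give the same bound with more bookkeeping.
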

\begin{proof}
    Given that $B$ is a polynomial and $x'$ is affine in both $u$ and $v$, we can write $E[B(x') \mid x] = B(E[x'\mid x]) + \xi$, where $\xi$, called 
    Jensen's gap \cite{liao2018sharpening}, is a non-negative polynomial of $x$, $u$, and moments $E[v^d]$, where $2\leq d \leq m$ is an even integer.  This polynomial can always be upper bounded by a constant $\up{\xi} \geq 0$ for bounded $x$ and $u$.
    Then, with $\tilde{x}' = E[x']$, the min $\beta_q$ is upper bounded by
    \begin{subequations}
        \begin{align}
            \min_u \beta_q \leq \min_u \max_{x \in q} \big( B(\tilde{x}' \mid x) - B(x) + \up{\xi} \big)
            \leq \min_u \max_{x \in q} B(\tilde{x}' \mid x) - \min_{x \in q} B(x) + \up{\xi}. \nonumber
        \end{align}
    \end{subequations}
    The second inequality holds because $B$ is a non-negative function resulting from a SOS optimization,
    and is thus  
    SOS-convex
    \citep{Ahmadi2013SOSConvexLF}. 
    Since $B(\tilde{x}')$ monotonically decreases as $\tilde{x}' \to x^*$, $\max_{x \in q} B(\tilde{x}' \mid x)$ is minimized by $\arg\min_u  \|\NN(x) + g u - x^*\|_1 \; \forall x \in q$.
\end{proof}

We note that, in the proof of Lemma \ref{LemmaControl}, 
the moments $E[v^d]$ appear in every monomial of polynomial $\xi$ (Jensen's gap).  Hence, for $v$ with variance less than 1, Jensen's gap is very small and  vanishes as the variance becomes smaller, making the controller in the lemma optimal.  
Furthermore, even though Lemma \ref{LemmaControl} uses $L_1$ norm for $\|\tilde{x}' -x^* \|_{1}$, the statement also holds for $L_2$ and $L_\infty$ norms.  We specifically use $L_1$ norm to be able to compute the controller via an LP as stated in Theorem \ref{th:control_synthesis}.

\begin{theorem}[Controller Synthesis]
    \label{th:control_synthesis}
    Consider convex barrier function $B(x)$ and its minimum argument $x^*$ as specified in Lemma \ref{LemmaControl}, and linear bounds $\low{f}_q(x)$ and $\up{f}_q(x)$ for $f^w(x)$ in region $q \subseteq \reals^n$ as in \eqref{eqn:UpperandLowerBound}.  A controller that robustly minimizes $\beta_q$ for System~\eqref{Eqn:Extended Equations} is the solution to the following LP optimization problem, where $\theta  = (\theta_1, \theta_2, \ldots, \theta_n) \in \reals^{n}_{\geq 0} $, and the vector inequality relations are element-wise:\\
    \vspace{-5mm}
    \begin{subequations}
    \begin{equation}
        \qquad \min_{\theta_i, z', z'', u} \quad \sum_{i=1}^{n} \theta_{i}   \qquad  \text{subject to:} \nonumber 
        \hspace{80mm} 
    \end{equation}
    \textcolor{white}{.} \hspace{10mm}
    \begin{minipage}{0.43\textwidth}
        \vspace{-5mm}
        \begin{align}
            & y-x^* \leq \theta \label{lpconstraint1} \\
            & x^* - y \leq \theta \label{lpconstraint2}\\
            & z = z' - z'' \label{lpconstraint3}
        \end{align}
    \end{minipage}
    \hspace{15mm}
    \begin{minipage}[c]{0.35\textwidth}
        \vspace{-5mm}
        \begin{align}
            & y \geq \low{f}_q(z) + g u  \label{lpconstraint4}\\
            & y \leq \up{f}_q(z) + gu \label{lpconstraint5}\\
            & z',z'' \geq 0, \; z \in q, \; u \in U
        \end{align}
    \end{minipage}
    
    \end{subequations}
\end{theorem}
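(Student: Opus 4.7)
The plan is to recast the pointwise minimization identified in Lemma \ref{LemmaControl}, namely finding $u \in U$ that drives $\tilde{x}' = \NN(z) + g u$ as close as possible in $L_1$ norm to the minimizer $x^* = \arg\min_x B(x)$, into a linear program using three standard encodings. First, I would invoke Lemma \ref{LemmaControl} to justify that it suffices to solve this geometric problem, since minimizing $\|\tilde{x}' - x^*\|_1$ already robustly minimizes $\beta_q$ for the SOS-convex barrier $B$; this frees us from reasoning about the expectation of the polynomial $B$ directly.

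Next I would tackle the non-convexity of $\NN$ by introducing an auxiliary decision variable $y \in \reals^n$ that stands in for $\tilde{x}'$, and replacing the nonlinear equation $y = \NN(z) + g u$ with the linear envelope inclusion $\low{f}_q(z) + g u \leq y \leq \up{f}_q(z) + g u$ guaranteed by \eqref{eqn:UpperandLowerBound}. This yields constraints \eqref{lpconstraint4}-\eqref{lpconstraint5}, which are linear in $(y,z,u)$ because $\low{f}_q$ and $\up{f}_q$ are affine. To linearize the $L_1$ objective I would use the classical absolute-value trick: introduce slacks $\theta \in \reals^n_{\geq 0}$ with \eqref{lpconstraint1}-\eqref{lpconstraint2}, so that at any optimum of $\min \sum_i \theta_i$ we have $\theta_i = |y_i - x^*_i|$. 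Finally, the splitting $z = z' - z''$ with $z',z'' \geq 0$ via \eqref{lpconstraint3} puts $z$ into standard non-negative LP form; the residual constraint $z \in q$ remains linear because, as specified in Section \ref{sec:crown}, the partition regions are basic semi-algebraic sets defined by affine inequalities. All constraints and the objective are thus linear in the decision variables $(\theta, z', z'', y, u)$, so the program is a bona-fide LP.

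The main obstacle is arguing that the $u^*$ returned by the LP is precisely the controller promised by Lemma \ref{LemmaControl}, rather than an artefact of the fact that the LP jointly minimizes over $z \in q$ instead of treating $z$ as an adversarial worst-case. My plan is to lean on the inequality chain from the proof of Lemma \ref{LemmaControl}: SOS-convexity of $B$ and the monotone decrease of $B(\tilde{x}')$ as $\tilde{x}' \to x^*$ together imply that any $u$ pushing $\tilde{x}'$ toward $x^*$ also drives down the upper bound $\max_{x \in q} B(\tilde{x}'(x,u)) - \min_{x \in q} B(x) + \up{\xi}$ on $\beta_q$. Consequently, the LP-optimal $u^*$ achieves the $L_1$-closest approach to $x^*$ permitted by the convex relaxation of $\NN$ on $q$, and by Lemma \ref{LemmaControl} this robustly minimizes $\beta_q$, completing the proof.
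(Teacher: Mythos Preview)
Your proposal is correct and follows essentially the same approach as the paper: invoke Lemma~\ref{LemmaControl} to reduce to the $L_1$ problem, encode the absolute values via the slacks $\theta$ and Constraints~\eqref{lpconstraint1}--\eqref{lpconstraint2}, replace $\NN$ by its affine envelope via \eqref{lpconstraint4}--\eqref{lpconstraint5}, and split $z=z'-z''$ to obtain standard LP form. The paper's proof is in fact terser than yours; it simply asserts that the constraints ``have to hold for all $z\in q$, i.e., $z$ is a free variable'' and does not explicitly confront the min-versus-max issue you raise, so your discussion of that obstacle goes slightly beyond what the paper provides.
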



\begin{proof}
By Lemma \ref{LemmaControl} it holds that to robustly minimizes $\beta_q$, it suffices to minimize $\|\tilde{x}' -x^* \|_{1}$. 
This norm is the sum of $n$ absolute values, each of which can be set to $\theta_i$ with linear Constraints \eqref{lpconstraint1} and \eqref{lpconstraint2}.
Hence, the minimization of $\|\tilde{x}' -x^* \|_{1}$ is equivalent to minimization of $\sum_{i=1}^n \theta_{i}$ subject to \eqref{lpconstraint1} and \eqref{lpconstraint2}. 
The bounds on the dynamics of 
System \eqref{Eqn:Extended Equations}
can be expressed as linear Conditions \eqref{lpconstraint4} and \eqref{lpconstraint5}, each of which has to hold for all $z \in q$, i.e., $z$ is a free variable.  
To turn the optimization into an LP, $z$ is expressed as the difference of two non-negative decision variables $z'$ and $z''$ in Constraint \eqref{lpconstraint3}. 
\end{proof}

\begin{remark}
\label{remark:control_poly}
Note that the controller obtained via the LP is a vector of scalars for each region $q$.  This leads to a feedback controller $\pi$ over the regions $q \in Q$. 
It is possible to express $u$ as SOS polynomials, which would be a feedback controller over $x$.  That could potentially provide a tighter bound for $\beta_{q}$ at the cost of extra computation since SOS optimization is more expensive than LP.
\end{remark}

\subsection{Control Synthesis Algorithm}
\label{sec:algorithm}
Our framework for computing a barrier certificate as well as minimally-invasive controllers for NNDMs is outlined in Algorithm \ref{alg:overview}.  First, the safe set $X_\safe$ is uniformly partitioned into a finite set of regions $|Q|$, and bounds of $\NN$ are computed as piece-wise linear functions per Section \ref{sec:crown}.
Then, for a given polynomial degree $m\geq2$, a barrier certificate along with its corresponding safety probability bound are computed in accordance with Theorem \ref{th:sosp_bounds_nn}. 
If this probability is greater than or equal to $\delta_\safe$, the NNDM is certified, and the algorithm terminates. 
Otherwise, it iteratively searches for $\epsilon$-minially-invasive controllers.
In each iteration, upper bound $\up{\eta}$ is set on $\eta$ (initially with its largest possible value $1-\delta_\safe$ and then reduced by $\Delta \eta >0$), and a new barrier function is computed with this bound, where the objective function is minimization of $\beta$ to ensure $\epsilon$-minimally-invasive controllers.
Then, for each $q \in Q$, $\beta_{q}$ is checked against the threshold. If exceeding, a controller is computed for $q$ in accordance with Theorem \ref{th:control_synthesis}.
This process repeats until a satisfactory $P_\safe$ is obtained or $\bar{\eta} < 0$.

We highlight two important properties of this algorithm: it clearly terminates in finite time, and if $P_{\safe} \geq \delta_\safe$, the synthesized controller is correct-by-construction.

\IncMargin{1.5em} 
  \begin{center}
    \scalebox{0.9}{
    \begin{minipage}{1\linewidth}
\begin{algorithm}[H]
    \caption{NNDM Controller Synthesis}
    \label{alg:overview}
    
    \SetKwInOut{Input}{Input} 
    \SetKwInOut{Output}{Output}
    
    \Input{ NN $\NN$, initial set $X_{\initial}$, safe set $X_\safe$, noise covariance $R$, polynomial degree $m$, threshold $\delta_\safe$, and step size $\Delta \eta$}
    \Output{ feedback controller $\pi$, and safety probability bound $P_\safe$}
    \BlankLine
    $Q,\low{f}_q,\up{f}_q \leftarrow \textsc{partitionAndComputeBounds}(X_\safe,\NN)$, \quad $k \leftarrow 0$\\ 
    $\eta, \beta, B(x) \leftarrow  \textsc{computeBarrierCertificate}(Q, \low{f}_q,\up{f}_q, R, m)$  \hspace{16mm} // Theorem \ref{th:sosp_bounds_nn} \\
    \While{$P_\safe < \delta_\safe$ and $\up{\eta} > 0$} 
        {
        $\up{\eta} \leftarrow 1 - (\delta_\safe - k \Delta \eta) $, \quad $k \leftarrow k + 1$  \hspace{41mm} // set upper bound on $\eta$\\
        $\eta, \beta, B(x) \leftarrow  \textsc{ComputeBarrierCertificate}(Q, \low{f}_q,\up{f}_q, R, m, \bar{\eta})$  \hspace{7mm} // Theorem \ref{th:sosp_bounds_nn}\\
        \For{$q \in Q$}
            {
            $\pi \leftarrow (q,0)$\\
            \If{$\beta_q > (1 - \delta_\safe - \eta)/N$}
                {
                $\pi \leftarrow (q,u_q) \leftarrow \textsc{computeControl}(\arg\min B(x),q, \low{f}_q,\up{f}_q, U) \;\;$  // Theorem \ref{th:control_synthesis} \\
                $\beta_q \leftarrow \textsc{updateBeta}(B(x),u_q) $ \hspace{31mm} // evaluate Condition \eqref{constraint4}\\
                }
            }
        $P_\safe \leftarrow 1- (\eta + N \max_{q\in Q} \beta_q)$\\
        }
    \Return $\pi, P_\safe$
\end{algorithm} 
    \end{minipage}%
    }
  \end{center}

\section{Case Studies}
\label{sec:CaseStudies}

We demonstrate the efficacy of our framework  
on several case studies. We first show that our verification method is able to produce non-trivial safety guarantees for NNDMs trained using imitation learning techniques \cite{brunton2022data} from data gathered by rolling out an RL agent trained with state-of-the-art RL techniques \cite{gymLeaderboard}.
We then show that our control approach  significantly increases the safety probability. 

\textbf{Models}
{
We trained the \emph{Pendulum} (\emph{$2$D}), \emph{Cartpole} (\emph{$4$D}), and \emph{Acrobot} (\emph{$6$D}) agents from the OpenAI gym environment, 
as well as \emph{$4$D} and \emph{$5$D Husky robot} models.
For the OpenAI models, we collected input-output states of the system under an expert controller, either as a look-up table or an NN controller. We picked the best controller available for the agents from the OpenAI Leaderboard \cite{gymLeaderboard}.}
{The Husky models were trained to move from one point to another while staying within a lane.}
We initially trained a NNDM along with an NN controller on the trajectory data generated from PyBullet (physics simulator \cite{pybullet2021}) as outlined in \cite{nagabandi2018neural}. We then used these NNs to generate input-output data and trained closed-loop NNDMs as specified above.    
In total, we trained 10 
fully-connected multi-layer perceptron NNDM architectures with up to 5 hidden layers and 512 neurons per layer, each with ReLU activation function. 
The dimensionality and architecture information is provided in Table \ref{table: NNDM_results}. 




\begin{table}[t]
\caption{Benchmark results for barrier certificate and control synthesis on various NNDMs. Threshold $\delta_\safe = 0.95$ was used for the minimally-invasive control synthesis. Architecture $h \times [r]$ denotes a NNDM with $h$ hidden layers, each with $r$ neurons. $n$ is the dimensionality of the system.}
\vspace{1mm}
\resizebox{1\columnwidth}{!}{%
\begin{tabular}{@{}c| c | c c | ccc| ccc || ccc | ccc@{}}
\toprule
\multirow{3}{*}{Model} & \multirow{3}{*}{$n$} & \multirow{3}{*}{$h \times [r]$} & \multirow{3}{*}{$|Q|$} & \multicolumn{6}{c ||}{\underline{\hspace{12 mm} Verification \hspace{12 mm}}} & \multicolumn{6}{c}{\underline{\hspace{12 mm} Control \hspace{12 mm}}} \\
 &  &  &  & \multicolumn{3}{c|}{Interval Bounds} & \multicolumn{3}{c ||}{Linear Bounds} & \multicolumn{3}{c|}{Interval Bounds} & \multicolumn{3}{c}{Linear Bounds} \\ \cline{5-16}
 &  &  &  & $\beta$ & $P_\safe$ & Time (min) & $\beta$ & $P_\safe$ & Time (min) & $\beta$ & $P_\safe$ & Time (min) & $\beta$ & $P_\safe$ & Time (min) \\
  \hline \hline
\multirow{12}{*}{Pendulum} & \multirow{12}{*}{2} & \multirow{3}{*}{1 x {[}64{]}} & 120 & 0.531 &  0.468 & 0.06 & 0.005 & 0.995  & 0.09 & $10^{-6}$ & 0.999 & 0.04 & - & - & - \\
 &  &  & 240 & 0.430 & 0.569 & 0.14 & 0.005 & 0.995 & 0.14 & $10^{-6}$ & 0.998 & 0.11 & - & - & - \\
 &  &  & 480 & 0.146 & 0.854 & 0.36 & 0.005 & 0.995 & 0.43 & $10^{-6}$ & 0.999 & 0.32 & - & - & - \\ \cline{3-16}
 &  & \multirow{3}{*}{2 x {[}64{]}} & 120 & 0.555 & 0.444 & 0.05 & 0.196 & 0.782 & 0.06 & $10^{-6}$ & 0.999  & 0.04 & $10^{-6}$ & 0.999 & 0.05 \\
 &  &  & 240 & 0.440 & 0.556 & 0.14 & 0.157 & 0.841 & 0.15 & $10^{-6}$ & 0.997 & 0.11 & $10^{-6}$ & 0.998 & 0.13 \\
 &  &  & 480 & 0.196 & 0.802 & 0.46 & 0.079 & 0.919 & 0.48 & $10^{-6}$ & 0.998 & 0.32 & $10^{-6}$ & 0.999 & 0.38 \\ \cline{3-16}
 &  & \multirow{3}{*}{3 x {[}64{]}} & 120 & 0.674 & 0.320 & 0.05 & 0.388 & 0.597 & 0.06 & $10^{-6}$ & 0.994 & 0.06 & $10^{-6}$ & 0.985 & 0.05 \\
 &  &  & 240 & 0.534 & 0.461 & 0.14 & 0.293 & 0.703 & 0.17 & $10^{-6}$ & 0.995 & 0.11 & $10^{-6}$ & 0.996 & 0.12 \\
 &  &  & 480 &0.356 & 0.636  & 0.41 & 0.211 & 0.788  & 0.51 & $10^{-6}$ & 0.999 & 0.39 & $10^{-6}$ & 0.993 & 0.47 \\ \cline{3-16}
 &  & \multirow{3}{*}{5 x {[}64{]}} & 480 & 0.868  &  0.030 & 0.40 & 0.827  &0.124 & 0.44 & $10^{-6}$ & 0.907 & 0.37 & $10^{-6}$ & 0.951  & 0.34 \\
 &  &  & 960 & 0.722 & 0.229 & 1.55 & 0.692 & 0.259  & 1.67 &  $10^{-6}$ & 0.909 & 1.42 & $10^{-6}$ & 0.951  & 1.61 \\
 &  &  & 1920 & 0.503 & 0.448 & 5.82 & 0.458  & 0.492 & 6.28 & $10^{-6}$ & 0.911 & 5.33 & $10^{-6}$ & 0.951 & 5.39 \\
 \hline
\multirow{6}{*}{Cartpole} & \multirow{6}{*}{4} & \multirow{3}{*}{1 x {[}128{]}} & 960 & 1.00  & 0.00  & 17.41 & 0.625 & 0.375 & 17.65 & $10^{-6}$  & 0.877 & 12.43 & $10^{-6}$ & 0.998  & 15.53 \\
 &  &  & 1920 & 1.00  & 0.00  & 57.83 & 0.417  & 0.574  & 65.15 & $10^{-6}$ & 0.900 & 31.31 & $10^{-6}$ & 0.993 & 35.09 \\
 &  &  & 3840 & 0.796 & 0.193  & 216.92 & 0.385 & 0.599 & 224.70 & $10^{-6}$ &  0.979 & 158.51  & $10^{-6}$ & 0.985 & 164.41 \\ \cline{3-16}
 &  & \multirow{3}{*}{2 x {[}128{]}} & 960 & 1.00 & 0.00 & 17.80 & 0.779 & 0.172 & 18.19 & $10^{-6}$ & 0.816  & 11.98 & $10^{-6}$ & 0.951  & 14.02 \\
 &  &  & 1920 & 1.00 & 0.00 & 35.44 & 0.773 & 0.178 & 34.71 & $10^{-6}$ & 0.849 & 31.98 & $10^{-6}$ & 0.997  & 36.44 \\
 &  &  & 3840 & 0.878 & 0.101 & 228.78 & 0.666 &0.331  & 225.42 & $10^{-6}$ & 0.980 & 148.49 & $10^{-6}$ & 0.997 & 160.46 \\
 \hline
\multirow{7}{*}{Husky} & \multirow{7}{*}{4} & \multirow{4}{*}{1 x {[}256{]}} & 900 & 0.672 & 0.200 & 14.23 & 0.691 & 0.259 & 16.80 & $10^{-6}$ & 0.872 & 11.22 & $10^{-6}$ & 0.951 & 11.71 \\
 &  &  & 1800 &  0.667 & 0.211 & 52.70 & 0.673 & 0.278 & 58.76 & $10^{-6}$ &  0.878 & 33.27 & $10^{-6}$ & 0.951 & 35.10 \\
 &  &  & 2250 & 0.645 & 0.288  & 69.12 & 0.368 &0.494 & 81.32 & $10^{-6}$ & 0.933 & 42.01 & $10^{-6}$ & 0.951 & 46.65 \\
 &  &  & 4800 & 0.622 & 0.331  & 300.10 & 0.347 & 0.539 & 345.84 & $10^{-6}$ & 0.951  & 298.71 & $10^{-6}$ & 0.953 & 302.31 \\ \cline{3-16}
 &  & \multirow{3}{*}{2 x {[}256{]}} & 1800 & 1.00  & 0.00 & 64.61 & 0.544 & 0.010 & 74.62 & $10^{-6}$ & 0.749 & 30.44 & $10^{-6}$ & 0.800 & 34.99 \\
 &  &  & 2250 & 1.00 & 0.00 & 69.76 &   0.502 & 0.222  & 74.80 & $10^{-6}$ & 0.847 & 49.01 & $10^{-6}$ & 0.951  & 52.02 \\
 &  &  & 4800 &0.845  & 0.062  & 364.45 & 0.415  &0.384  & 400.88 & $10^{-6}$ &0.908 & 295.75 & $10^{-6}$ & 0.951 & 274.94 \\
  \hline
\multirow{3}{*}{Husky} & \multirow{3}{*}{5} & \multirow{3}{*}{1 x {[}512{]}} & 432 & 1.00  & 0.00   & 12.43 & 1.00  & 0.00 & 11.82 & $10^{-1}$ & 0.750 & 7.21  & $10^{-6}$ & 0.949 & 9.03  \\
 &  &  & 1080 & 1.00   &  0.00 & 51.59 & 1.00  & 0.00 & 57.65 & $10^{-3}$ & 0.829 & 49.38 & $10^{-6}$ & 0.950 & 51.76 \\
 &  &  & 1728 & 1.00  & 0.00  & 168.39 & 1.00 & 0.00 & 171.35 & $10^{-3}$ & 0.951 & 153.59 & $10^{-6}$ &0.951  & 161.04 \\
   \hline
 \multirow{2}{*}{Acrobot} & \multirow{2}{*}{6} & \multirow{2}{*}{1 x {[}512{]}} & 144 & 1.00  &  0.00 & 11.10 &  1.00 & 0.00 & 12.02 & $10^{-6}$  & 0.951 & 4.44 & $10^{-6}$ & 0.951  & 4.93  \\
 &  &  & 288 &  1.00  &  0.00  & 25.01 & 0.863  & 0.088  & 25.62 &  $10^{-6}$ & 0.951 & 11.55 & $10^{-6}$ & 0.951 & 11.78  \\
\bottomrule
\end{tabular}
}
\label{table: NNDM_results}
\end{table}

{
The state space of the Pendulum consists of the pole angle $\theta$ and angular velocity $\dot{\theta}$, with the safe set defined as $[-\pi/15, \pi/15] \times [-1, 1]$, and the initial set as $\theta_0 \in [-\pi/36, \pi/36]$. For the Cartpole, the state space consists of the cart position $x$, velocity $\dot{x}$, pole angle $\theta$, and angular velocity $\dot{\theta}$. The safe sets for cart position are defined as $x \in [-1, 1]$ and $\theta \in [-\pi/15, \pi/15]$, with the initial set $\theta_0 \in [-\pi/36, \pi/36]$. For the 4D Husky, the states are position $x$ and $y$, orientation $\theta$, and linear velocity $v$. The task of the controller is to keep the robot within a lane, and hence the safe set is $y \in [-1, 1]$, and the initial set is defined as any position $(x, y)$ that is within a radius of 0.1 from the origin. 
For the $5$D Husky, the additional state over the $4$D model is the angular velocity $\omega$; the task, the safe set, and the initial set remain the same. 
Finally, for the Acrobot, the state space consists of the cosines and sines of $\theta_1$ and $\theta_2$, and angular velocities $\dot{\theta_1}$ and ${\dot{\theta_2}}$. Here $\theta_1$ and $\theta_2$ are the pole angle of the first link and the angle of the second link relative to the first link, respectively.
The task is for the tip of the second link to reach a height of at least $y = 1$, with a safety constraint to never reach beyond $y=1.2$.
Thus, the safe set is defined as $\sin(\theta_1) \in [-0.6, 0.6]$ and  $\sin(\theta_2) \in [-0.6, 0.6]$, and the initial set is any point within a radius of 0.1 around the origin in the first 4 dimensions. 
}

\textbf{Implementation and Experimental Setup: } 
We implemented our algorithms in Python and Julia (code available in \cite{NeuralNetControlBarrier}).
For collecting data and training, we used the TensorFlow framework \cite{tensorflow2015-whitepaper}. To compute the linear approximations of the NNDMs, we utilized $\alpha$-CROWN \cite{Xu2020}. For the optimization, we used Julia's \emph{SumOfSquares.jl} package \cite{weisser2019polynomial, legat2017sos}. 
The optimizations are all performed single-threaded on a computer with 3.9 GHz 8-Core CPU and 64 GB of memory. 

\textbf{Benchmarks: }  
{
To the best of our knowledge, there is no other work on safety verification of NNDMs, 
against which we can compare.
Hence as a baseline, we compare our approach based on linear under and over approximation of the NNDM against interval bounding approaches cf. Remark \ref{remark:globalexplicit}. The interval bounding approach results in a relaxed optimization problem at the cost of more conservative safety probabilities. 
The results are shown in Table \ref{table: NNDM_results}. 
The computation times indicate the time to synthesize the barrier and the controller using Theorems \ref{th:sosp_bounds_nn} and \ref{th:control_synthesis}, respectively. The barrier degree is 4 
for all the case studies. 
Below, we provide a brief discussion on the results. 
For more in-depth discussions and details on the discretization and hyperparameters, see Appendix \ref{extra: exp_setup}.
}


\textbf{Verification: }
As expected, the safety probability always increases with $|Q|$ (finer partitions), since the linear and interval bounds for $\NN$ become more accurate. This behavior is encountered in all the systems and architectures, regardless of the number of hidden layers or neurons per layer. However, it is also observed that this trend of finer discretizations and increased safety probability consistently comes at the cost of an increased computation time. This is due to the fact that the number of partitions
directly dictate the number of constraints in the optimization problems. Note that the probability of safety using interval bounds is strictly less than linear approximations for all the experiments. 
Using either linear or interval bounds,
 we are able to compute non-trivial probability of safety for all the case-studies and observe the discretization vs accuracy trend. Take for instance, the 1-layer Cartpole model, where an increase in $|Q|$ from $960$ to $3840$ causes an increase in $P_\safe$ from $0.375$ to $0.599$, at the expense of an increased computation time of more than 12 times. This behavior is also true for the 4D Husky model, where
increasing $|Q|$ from $1800$ to $4800$ for the 2-layer  model, increases $P_\safe$ from $0.010$ to $0.384$ at the price of a more than 5-fold increase in the computation time. 


\textbf{Control Synthesis: } When the certification probability is below $\delta_\safe=0.95$, the control strategy is utilized to increase the probability of safety. Using interval bounds, controllers had to be synthesized for all case-studies, which illustrates the looseness of the bounds. In fact, in several
case studies, the controller was not able to meet the desired threshold. For the linear bounds, observe that for all except two NNDMs, whose $P_\safe < \delta_\safe$, the controller synthesis algorithm was able to produce $P_\safe\geq \delta_\safe$, showing the effectiveness of our control method. 
The first exception is for the $4$D Husky model with 2-hidden layers and $|Q| = 1800$, where the controller increases $P_\safe$ from 0.010 to 0.800 (73-fold increase) yet cannot meet the 0.95 threshold. Similarly, for the 1-layer 5D Husky model with $|Q|= 432$, the controller increases $P_\safe$ from 0.00 to 0.949, but
the 0.95 threshold cannot be reached.
Nonetheless, for all the linear bound cases, the controller reduces $\beta$ to $10^{-6}$. 


Figure \ref{fig:2d_pendulum} illustrates the minimally-invasive aspect of the controllers.  
It shows the $\beta_q$ values for the Pendulum NNDM with 2 layers before and after the application of the controllers with three different sizes of $|Q|$. 
For $|Q|=120$, all the regions require controllers, but for $|Q|=240$ and 480, respectively, only 69.2\% and 23.75\% of the regions require controllers; hence, the minimally-invasive controller focuses specifically on those regions and does not affect those with already-small $\beta_q$.

\begin{figure*}[t!]
    \centering
    \begin{minipage}{0.80\textwidth}
        \begin{subfigure}[t]{0.32\textwidth}
            \centering
            \includegraphics[width=1\linewidth]{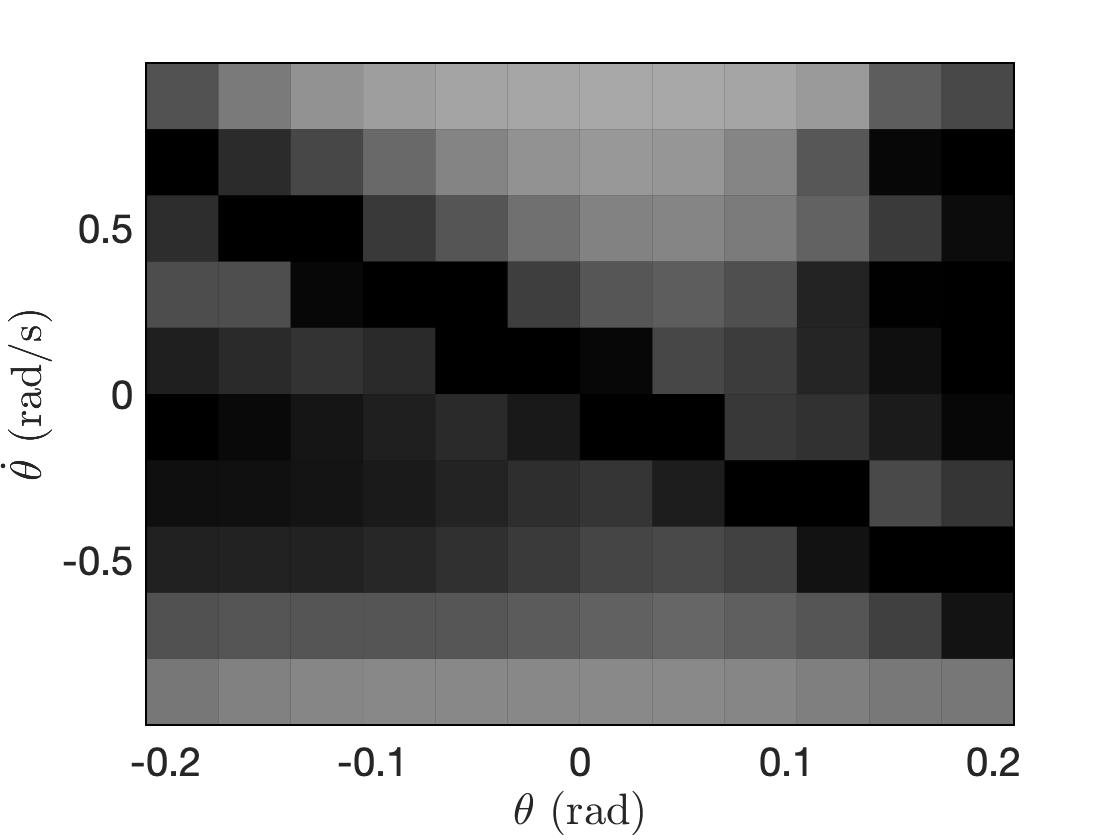}
            \caption{$|Q| = 120$, certificate}
        \end{subfigure}%
        \begin{subfigure}[t]{0.32\textwidth}
            \centering
            \includegraphics[width=1\linewidth]{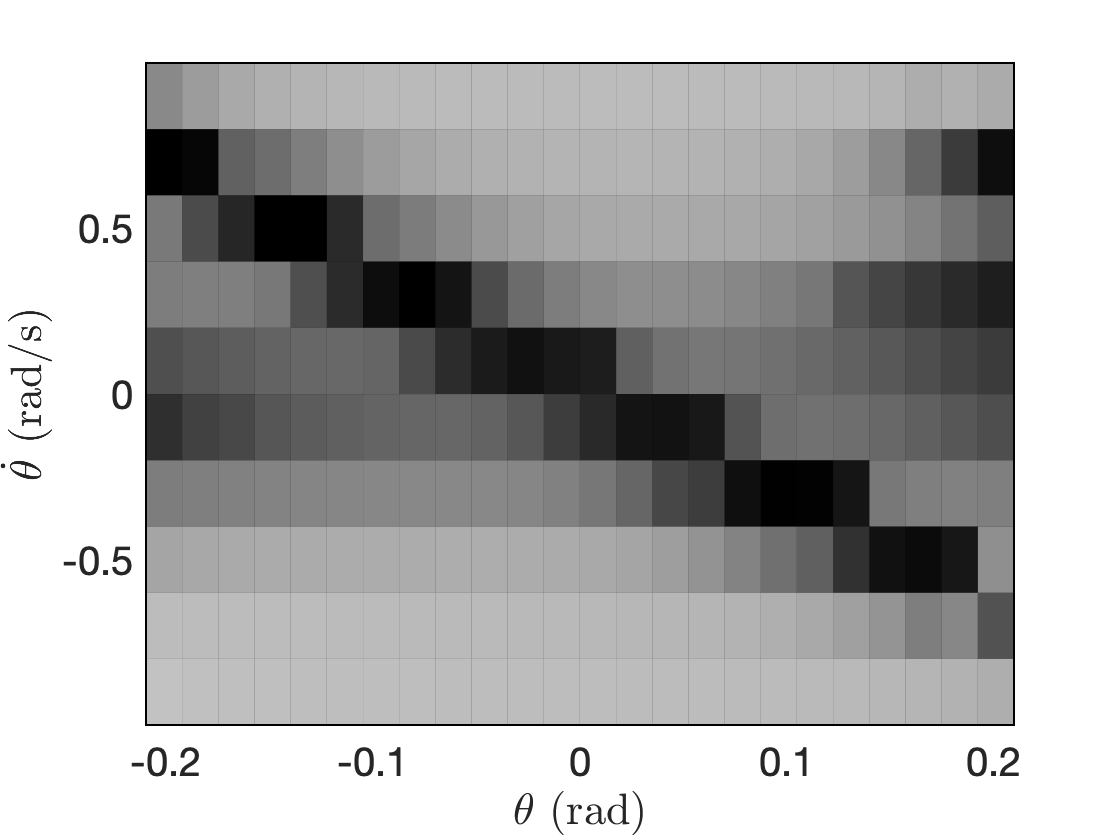}
            \caption{$|Q| = 240$, certificate}
        \end{subfigure}%
        \begin{subfigure}[t]{0.32\textwidth}
            \centering
            \includegraphics[width=1\linewidth]{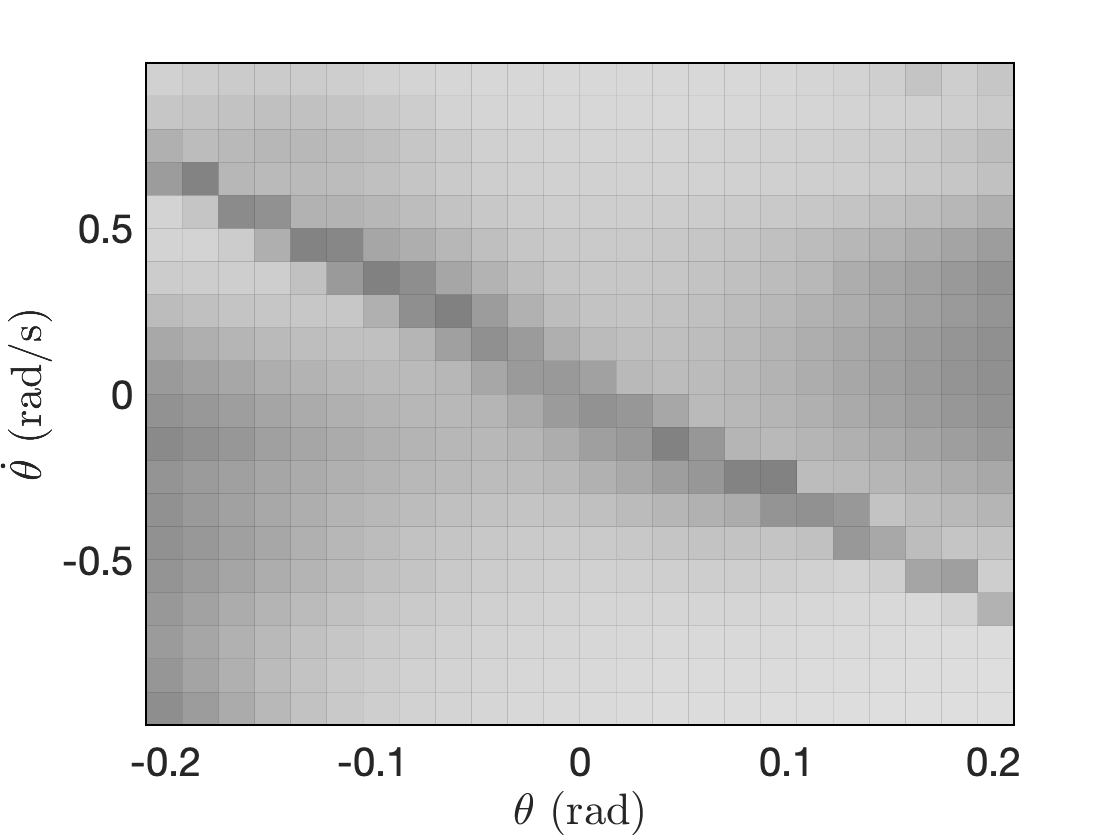}
            \caption{$|Q| = 480$, certificate}
        \end{subfigure}%
    
        \begin{subfigure}[t]{0.32\textwidth}
            \centering
            \includegraphics[width=1\linewidth]{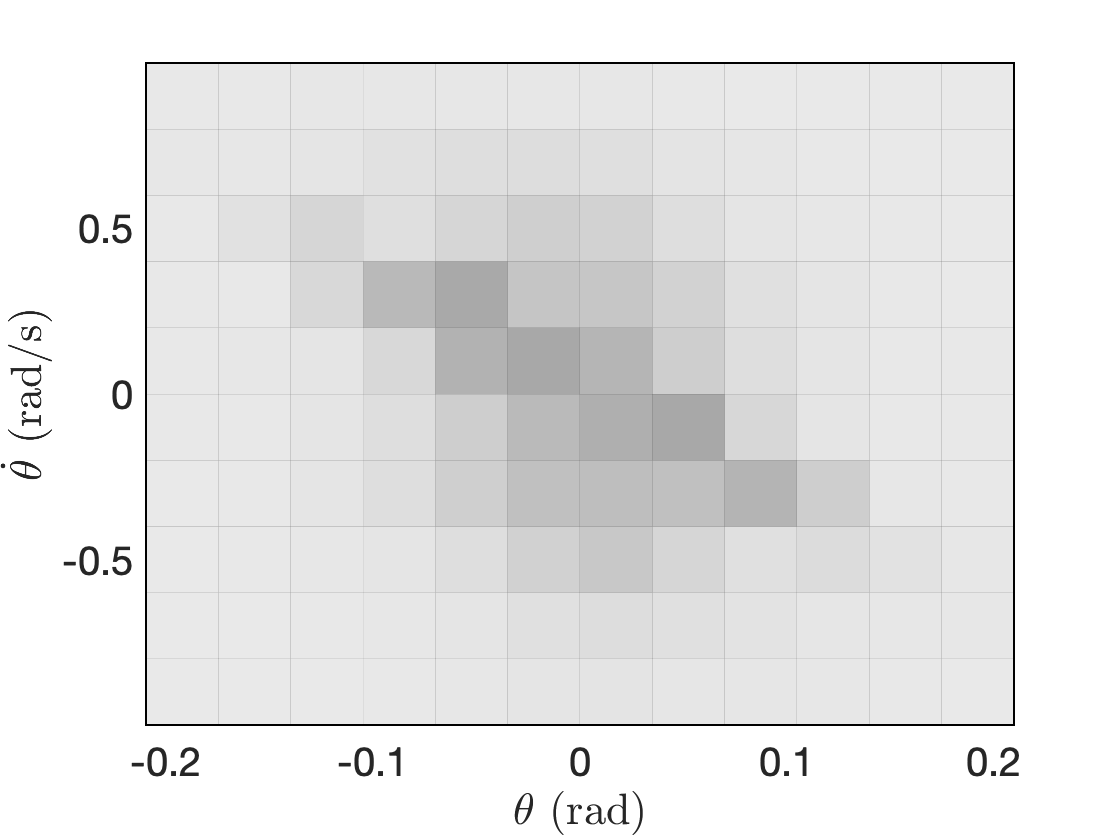}
            \caption{$|Q| = 120$, control}
        \end{subfigure}%
        \begin{subfigure}[t]{0.32\textwidth}
            \centering
            \includegraphics[width=1\linewidth]{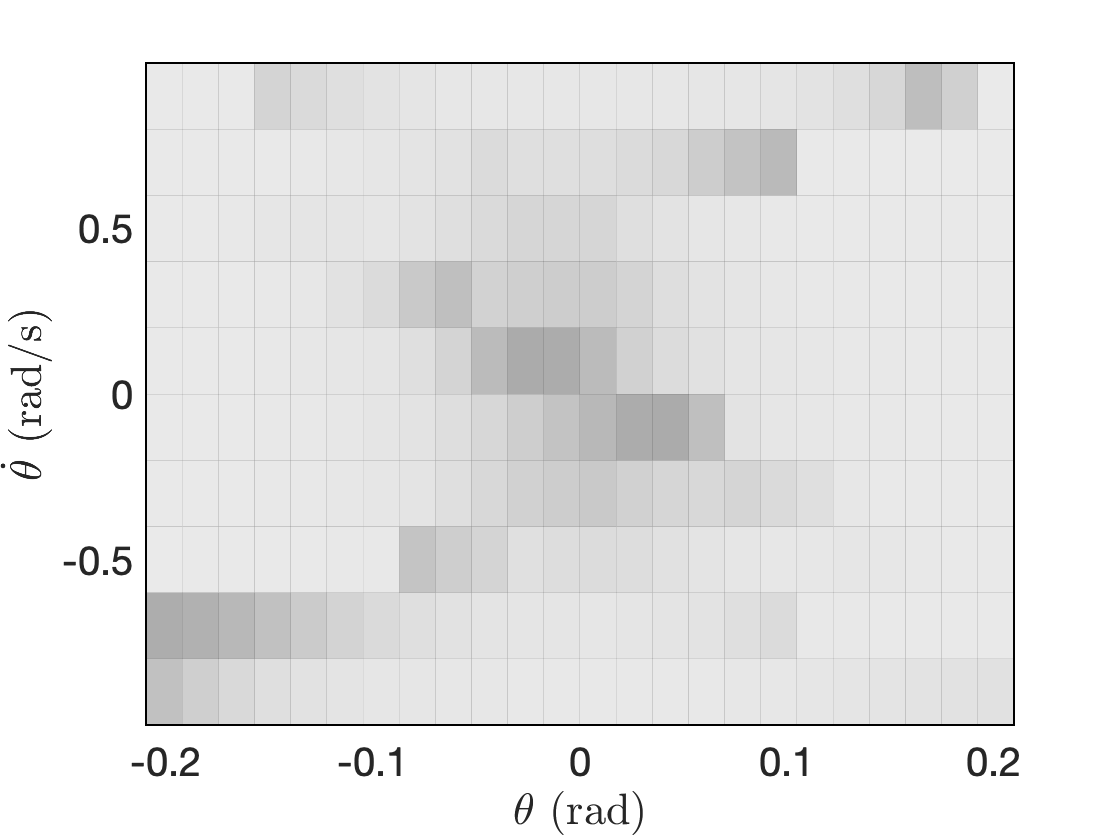}
            \caption{$|Q| = 240$, control}
        \end{subfigure}%
        \begin{subfigure}[t]{0.32\textwidth}
            \centering
            \includegraphics[width=1\linewidth]{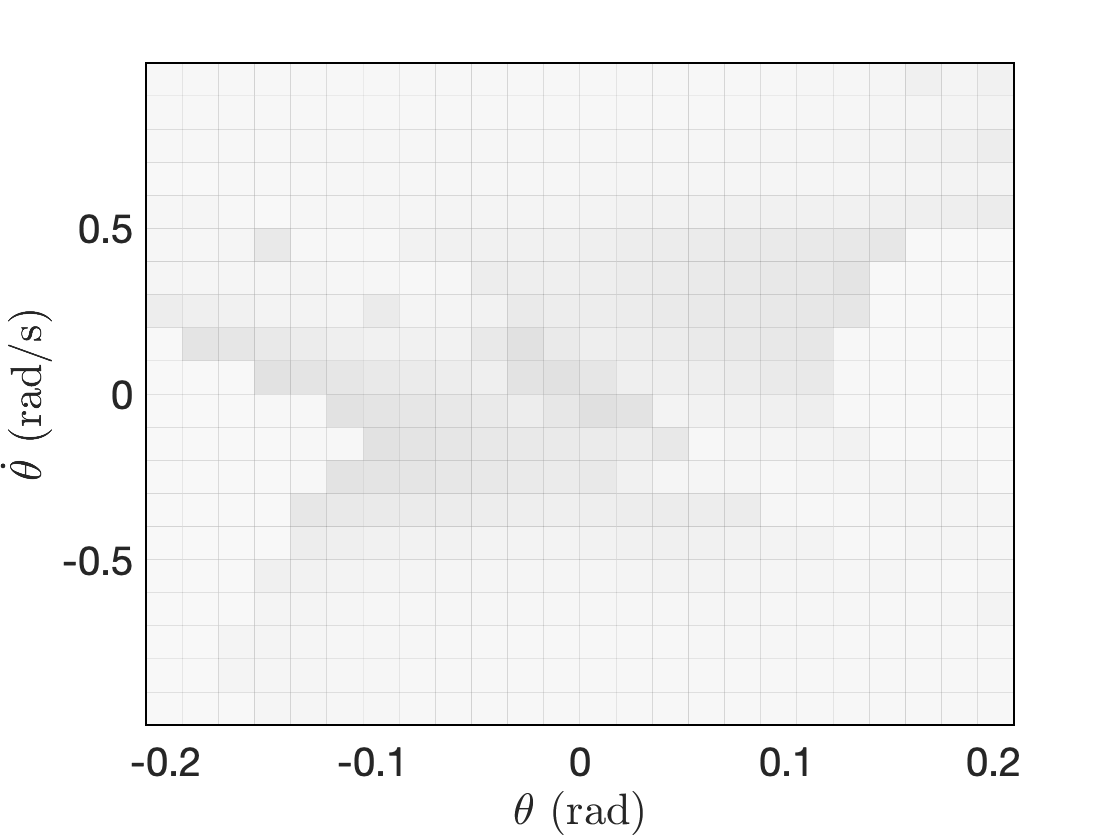}
            \caption{$|Q| = 480$, control}
        \end{subfigure}%
    \end{minipage}
    \hspace{-1.75mm}
    \begin{minipage}[c]{0.029\textwidth}
        \vspace{-.650cm}
         \begin{subfigure}[t]{\textwidth}
        \centering
        \includegraphics[height = 5.75cm,width=0.55cm]{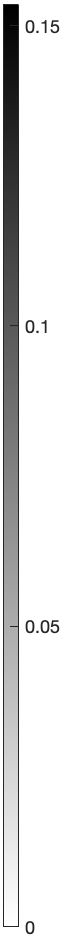}
    \end{subfigure}%
    \end{minipage}
    \caption{Comparison of $\beta_q$ values for the Pendulum NNDM with 2 layers and 64 neurons per layer before and after applying the controller for various $|Q|$. Shade of gray correspond to the values of $\beta_q$.}
    \label{fig:2d_pendulum}
\end{figure*}

\section{Conclusion}
\label{sec:conclusion}
In this work, we introduced a methodology for providing safety guarantees for NNDMs using stochastic barrier functions. We showed that the problem of finding a stochastic barrier function for a NNDM can be relaxed to an SOS optimization problem.  
Furthermore, we derived a novel framework for synthesis of an affine controller with the goal of maximizing the safety probability in a minimally-invasive manner. Experiments showcase that, along with a minimally-invasive controller, we can guarantee safety for various standard reinforcement learning problems.

A bottleneck to our approach is the discretization step, which is required to obtain piece-wise linear under- and over-approximations of the neural network, and can be expensive for   higher dimensional spaces.  
Future directions should focus on mitigating this curse of dimensionality.

\begin{ack}
This work was supported in part by the NSF grant 2039062 and NASA COLDTech Program under grant \#80NSSC21K1031.
\end{ack}

\bibliography{cite}
\clearpage
\newpage

\appendix

\clearpage
\newpage

\section{Additional Discussions on the Experiments}
\label{extra: exp_setup}

\begin{table}[b!]
\centering
\caption{Discretization parameters for each model of dimension $n$ for various partitioning $|Q|$.}
 \label{table:discretization} 
\resizebox{0.80\textwidth}{!}{
\begin{tabular}{c |c |c  ||cccccc}
\toprule
Model & $n$ & $|Q|$ & \multicolumn{6}{c}{Discretization} \\
\hline \hline
\multirow{6}{*}{Pendulum} & \multirow{6}{*}{2} &  & \multicolumn{3}{c}{$\theta$} & \multicolumn{3}{c}{$\dot{\theta}$} \\ \cline{4-9}
 &  & 120 & \multicolumn{3}{c}{$0.01745329$} & \multicolumn{3}{c}{$0.1$} \\
 &  & 240 & \multicolumn{3}{c}{$0.00872664$} & \multicolumn{3}{c}{$0.1$} \\
 &  & 480 & \multicolumn{3}{c}{$0.00872664$} & \multicolumn{3}{c}{$0.05$} \\
 &  & 960 & \multicolumn{3}{c}{$0.00436332$} & \multicolumn{3}{c}{$0.05$} \\
 &  & 1920 & \multicolumn{3}{c}{$0.00436332$} & \multicolumn{3}{c}{$0.025$} \\ \hline
\multirow{4}{*}{Cartpole} & \multirow{4}{*}{4} &  & $x$ & $\dot{x}$ & \multicolumn{2}{c}{$\theta$} & \multicolumn{2}{c}{$\dot{\theta}$} \\ \cline{4-9}
 &  & 960 & $0.2$ & $0.125$ & \multicolumn{2}{c}{$0.01745329$} & \multicolumn{2}{c}{$0.125$} \\
 &  & 1920 & $0.2$ & $0.125$ & \multicolumn{2}{c}{$0.01745329$} & \multicolumn{2}{c}{$0.0625$} \\
 &  & 3840 & $0.1$ & $0.125$ & \multicolumn{2}{c}{$0.000872665$} & \multicolumn{2}{c}{$0.125$} \\ \hline
\multirow{5}{*}{Husky} & \multirow{5}{*}{4} &  & $x$ & $y$ & \multicolumn{2}{c}{$\theta$} & \multicolumn{2}{c}{$v$} \\ \cline{4-9}
 &  & 900 & $0.2$ & $0.2$ & \multicolumn{2}{c}{$0.01745329$} & \multicolumn{2}{c}{$0.2$} \\
 &  & 1800 & $0.2$ & $0.2$ & \multicolumn{2}{c}{$0.00872665$} & \multicolumn{2}{c}{$0.2$} \\
 &  & 2250 & $0.2$ & $0.2$ & \multicolumn{2}{c}{$0.01745329$} & \multicolumn{2}{c}{$0.1$} \\
 &  & 4800 & $0.125$ & $0.125$ & \multicolumn{2}{c}{$0.01745329$} & \multicolumn{2}{c}{$0.125$} \\ \hline
\multirow{4}{*}{Huksy} & \multirow{4}{*}{5} &  & $x$ & $\dot{x}$ & $\theta$ & $\dot{\theta}$ & \multicolumn{2}{c}{$\omega$} \\ \cline{4-9}
 &  & 432 & $0.2$ & $0.2$ & $0.01745329$ & $0.2$ & \multicolumn{2}{c}{$0.2$} \\
 &  & 1080 & $0.2$ & $0.1$ & $0.01745329$ & $0.2$ & \multicolumn{2}{c}{$0.2$} \\
 &  & 1728 & $0.2$ & $0.2$ & $0.01745329$ & $0.125$ & \multicolumn{2}{c}{$0.125$} \\ \hline
\multirow{3}{*}{Acrobot} & \multirow{3}{*}{6} &  & $\cos(\theta_1)$ & $\sin(\theta_1)$ & \begin{tabular}[c]{@{}c@{}}$\cos(\theta_2)$\end{tabular} & \begin{tabular}[c]{@{}c@{}}$\sin( \theta_2)$\end{tabular} & $\dot{\theta_1}$ & $\dot{\theta_2}$ \\ \cline{4-9}
 &  & 144 & $0.05$ & $0.1$ & $0.05$ & $0.1$ & $0.25$ & $0.25$ \\
 &  & 288 & $0.05$ & $0.05$ & $0.05$ & $0.1$ & $0.25$ & $0.25$ \\
\bottomrule  
\end{tabular}
}
\end{table}

In this section, we first delineate a more in-depth discussion and detailed description of 
the training process 
for the Pendulum and Cartpole NNDMs. We then showcase our experiments on two of the OpenAI agents by providing simulation results for both the certification and the control of the NNDMs. At last, a detailed description on the training process for the Husky and Acrobot is provided.  
Table \ref{table:discretization} shows the partition width in each dimension for all the NNDMs.


\subsection{Pendulum and Cartpole}

We trained NNDMs with fixed number of epochs (300) and implemented an early stopping method to avoid over-fitting. 
The number of data points were increased from 5,000 to 50,000 as the depth of the NNDM increased for a fixed model, ranging from 1 layer to 5 layers. Figure \ref{fig: learned_vec_fields} shows the vector fields of the pendulum NNDMs with various architectures.

\paragraph{Pendulum}

In this case-study,  we trained a NNDM for a Pendulum agent which has fixed mass $m$ and length $l$ with actuator limits $u \in [-1, 1]$. We trained the NN under a given controller from \cite{gymLeaderboard}
that tries to keep the pendulum upright. We modified the original OpenAI gym environment and directly learned the evolution of state variables $\theta$ and $\dot{\theta}$. We trained $\NN: \reals^2 \rightarrow \reals^2$ with one to five hidden layers with 64 neurons each. The NNDM is trained in region $\theta \in [-\pi, \pi]$ and $\dot{\theta} \in [-1, 1]$. The safe set is $\theta \in [-\frac{\pi}{15}, \frac{\pi}{15}]$ and $\dot{\theta} \in [-1, 1]$ and the initial set is $\theta \in [-\frac{\pi}{36}, \frac{\pi}{36}]$. The state space $X$ is as mentioned in Table \ref{table: state_space}. We compute linear bounds for various discretization sizes as discussed in Section \ref{sec:crown} and Table \ref{table:discretization}.

The dynamics of the system is
\begin{align}
    \dot{\theta}_{k+1} & = \dot{\theta}_{k} + \frac{3g}{2l}\sin(\theta_{k})\delta t^2 + \frac{3}{ml^2}u\delta t^2,\\
    \theta_{k+1} & = \theta_k + \dot{\theta}_{k+1} \delta t
\end{align}



For the Pendulum 1-layer NNDM models, we notice that the probability of safety is above $\delta_s$ even for coarser discretizations.
In general, we observe that for a fixed model, the probability of safety increases as the discretization becomes finer. For a fixed discretization, the general trend is a decrease in probability of safety as the model gets  deeper (more hidden layers). 

Figure \ref{fig: pendulum_sim} shows the evolution of the state variables with and without the controller $\pi$ and the control evolution $u$ under $\pi$ for the 3-layer NNDM model with noise for one simulation. We observe the NNDM stays within the safe set in the $\theta$ dimension with the controller $\pi$, while the system violates the safety constraint without it. 
We also note that the controller magnitude varies from 0.1 to -0.4 which is within its bounds of $[-1,1]$.

\begin{figure*}[t!]
    \centering
    \begin{subfigure}[t]{0.25\textwidth}
        \centering
        \includegraphics[width=1\linewidth]{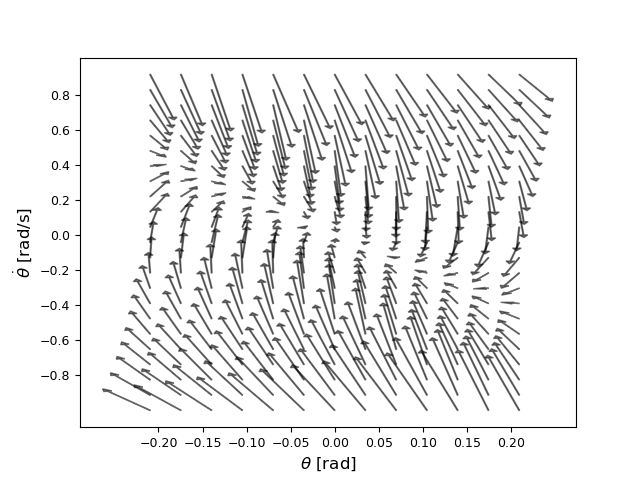}
        \caption{1 layer}
        \label{fig:pend_1_layer}
    \end{subfigure}%
    \begin{subfigure}[t]{0.25\textwidth}
        \centering
        \includegraphics[width=1\linewidth]{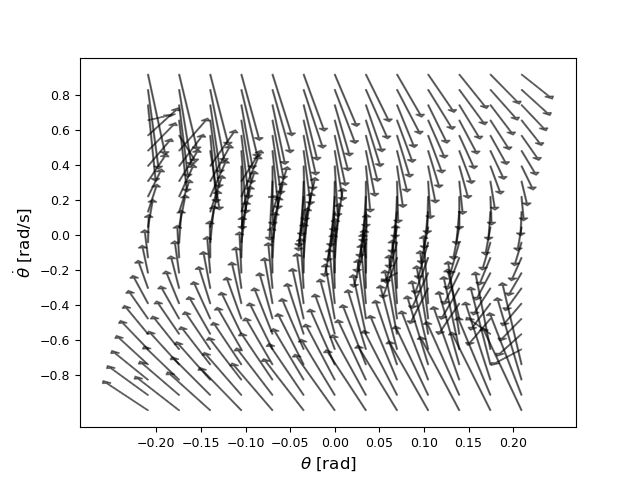}
        \caption{2 layer}
        \label{fig:pend_2_layer}
    \end{subfigure}%
    \begin{subfigure}[t]{0.25\textwidth}
        \centering
        \includegraphics[width=1\linewidth]{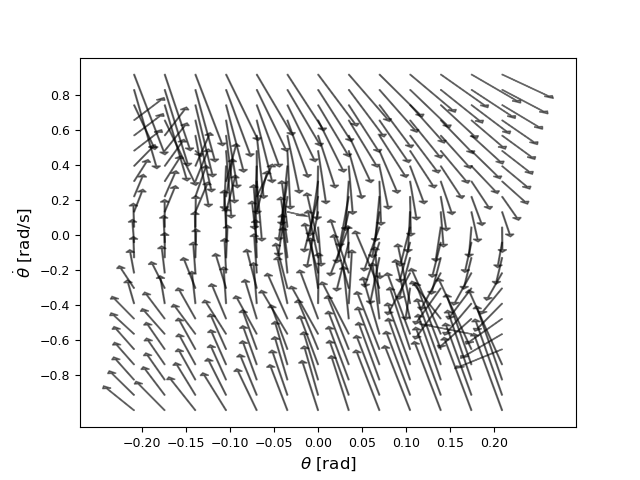}
        \caption{3 layer}
        \label{fig:pend_3_layer}
    \end{subfigure}%
    \begin{subfigure}[t]{0.25\textwidth}
        \centering
        \includegraphics[width=1\linewidth]{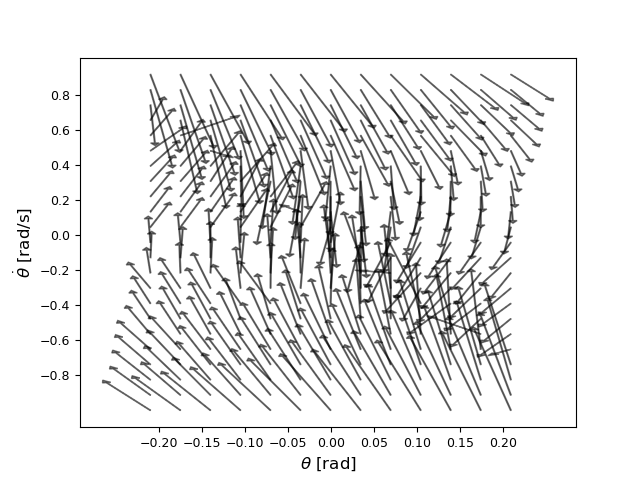}
        \caption{5 layer}
        \label{fig:pend_5_layer}
    \end{subfigure}%
    \caption{Vector fields of the  NNDMs for the Pendulum agent with various architectures.}
    \label{fig: learned_vec_fields}
\end{figure*}

\begin{figure*}[b!]
    \centering
    \begin{subfigure}[t]{0.49\textwidth}
        \centering
        \includegraphics[width=1\linewidth]{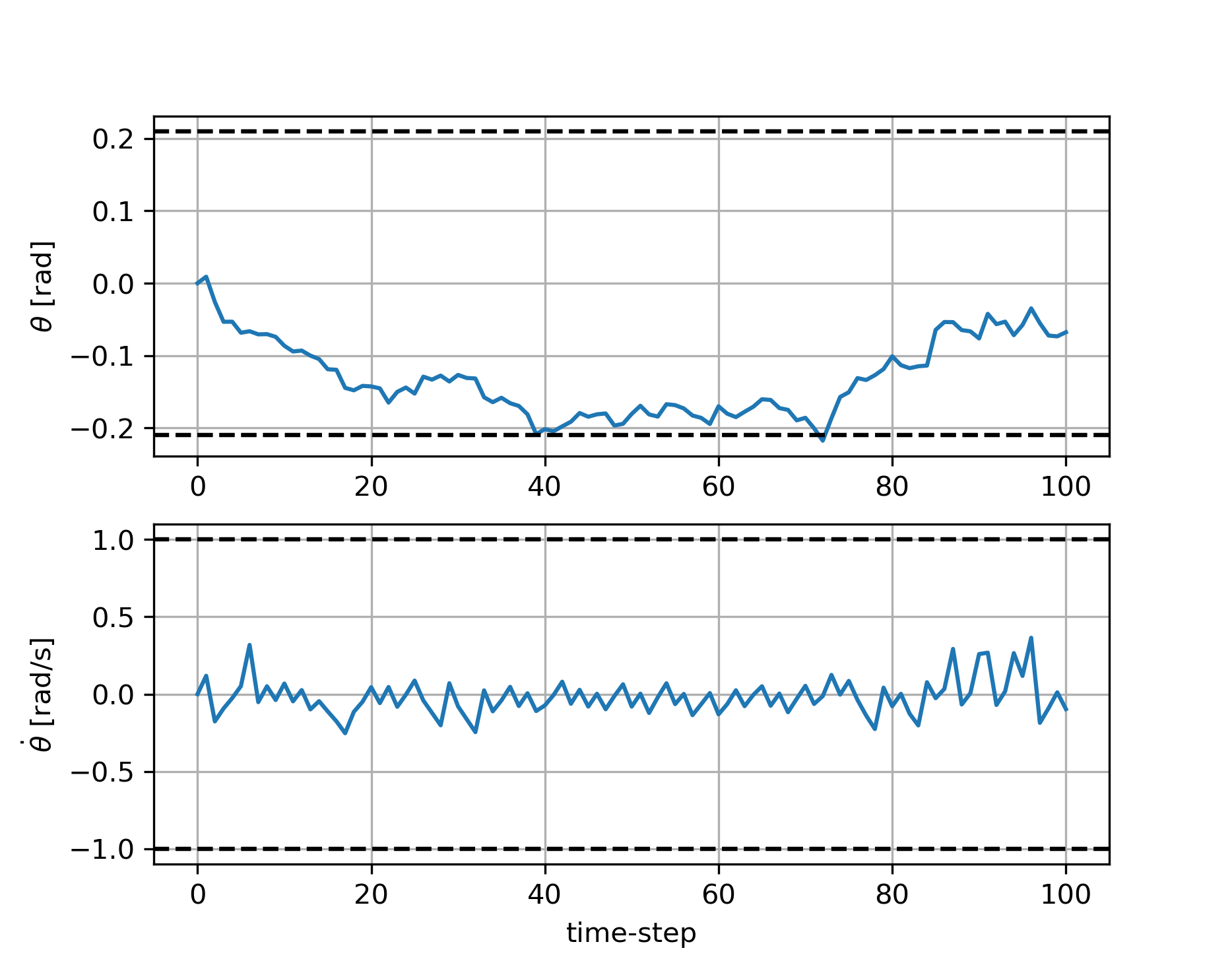}
        \caption{Pendulum without $\pi$}
        \label{fig:pendulum_no_control}
    \end{subfigure}%
        \begin{subfigure}[t]{0.49\textwidth}
        \centering
             \includegraphics[width=1\linewidth]{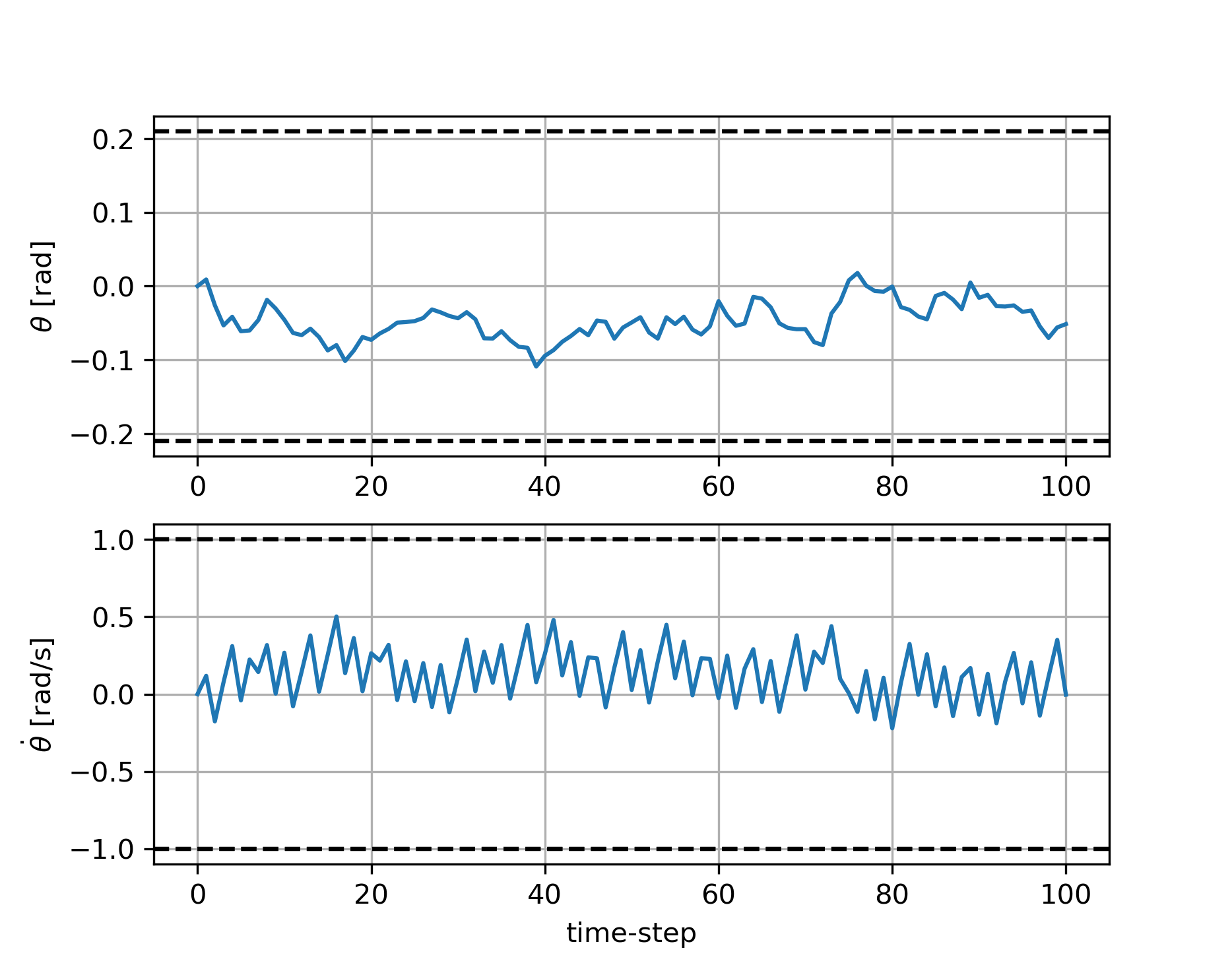}
        \caption{Pendulum with $\pi$}
        \label{fig:pendulum_control}
    \end{subfigure}%
    \vspace{-1.05mm}
    \begin{subfigure}[t]{0.49\textwidth}
        \centering
        \includegraphics[width=1\linewidth]{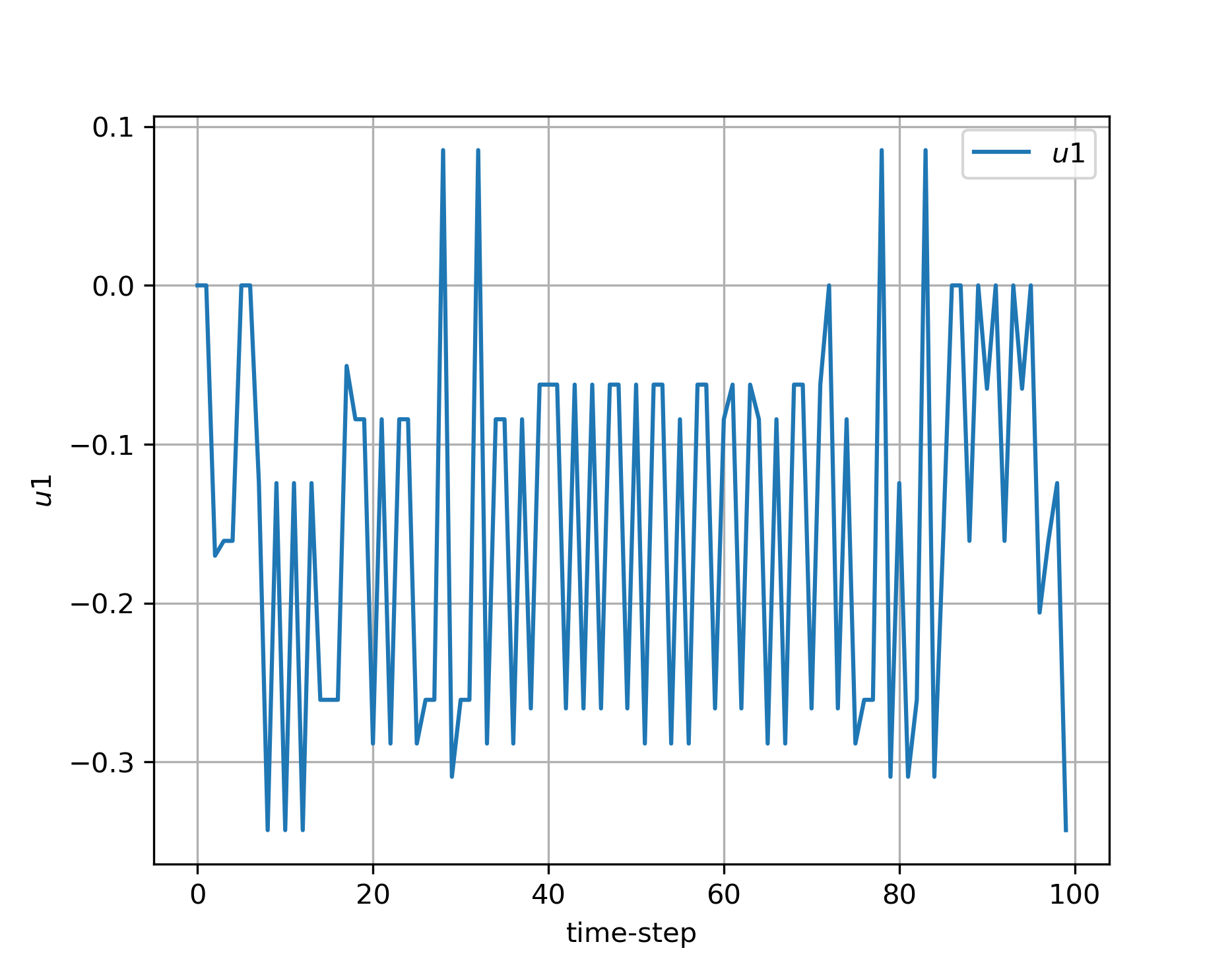}
        \caption{Control evolution under $\pi$}
        \label{fig:pendulum_control_vals}
    \end{subfigure}%
    
    \caption{State evolution of the Pendulum 3 layer NNDM model for 100 steps with and without safety feedback
    controller $\pi$. Blue line shows state evolution for the system starting at $\theta = 0$ and  $\dot{\theta} = 0$. Black dotted line demarcates the safe space ($X_\safe$) in $\theta$ dimension and the state space ($X$) in $\dot{\theta}$ dimension.
    }
    \label{fig: pendulum_sim}
\end{figure*}

\paragraph{Cartpole} For this 4-dimensional agent, we trained a NNDM that predicts the next state of the cartpole given the current state, i.e., $\NN: \reals^4 \rightarrow \reals^4$. For each model, we trained a NNDM, $\NN: \reals^4 \rightarrow \reals^4$, with one to three layers with 128 neurons in each layer. The NNDM is trained in region $x \in [-2.4, 2.4]$, $\dot{x} \in [-0.6, 0.6]$, $\theta \in [-\frac{\pi}{15}, \frac{\pi}{15}]$ and $\dot{\theta} \in [-0.6, 0.6]$. The safe set is ${x \in [-1, 1]}$ and ${\theta \in [-\frac{\pi}{15}, \frac{\pi}{15}]}$ with control limits $u_1,u_2 \in [-1, 1]$ in linear and angular acceleration dimensions. The initial set is defined to be $\theta \in [-\frac{\pi}{36}, \frac{\pi}{36}]$.

While the computation time to obtain the linear approximations increases significantly as the number of layers and number of neurons per layer increases, we notice the same trends as observed in the Pendulum agent's experiment, i.e., an increase in $P_\safe$ with finer discretization for a fixed model
and a decrease in $P_\safe$ as the model gets deeper (more hidden layers). For all the cartpole models, to achieve the desired probability of safety $\delta_\safe$, the controller needs to be applied in all the regions. Figure \ref{fig: cartpole_sim} shows the evolution of the 2-layer NNDM cartpole with and without the safety controller. 
We can see that
our controller is able to keep the system within the safe set as seen in Figure \ref{fig: cartpole_control} while it also stays within its allowable range of controller magnitude $[-1,1]$ as seen in Figure \ref{fig: cartpole_control_values}.


\subsection{Husky}

    

\begin{figure*}[t!]
    \centering
    \begin{subfigure}[t]{0.49\textwidth}
        \centering
        \includegraphics[width=1\linewidth]{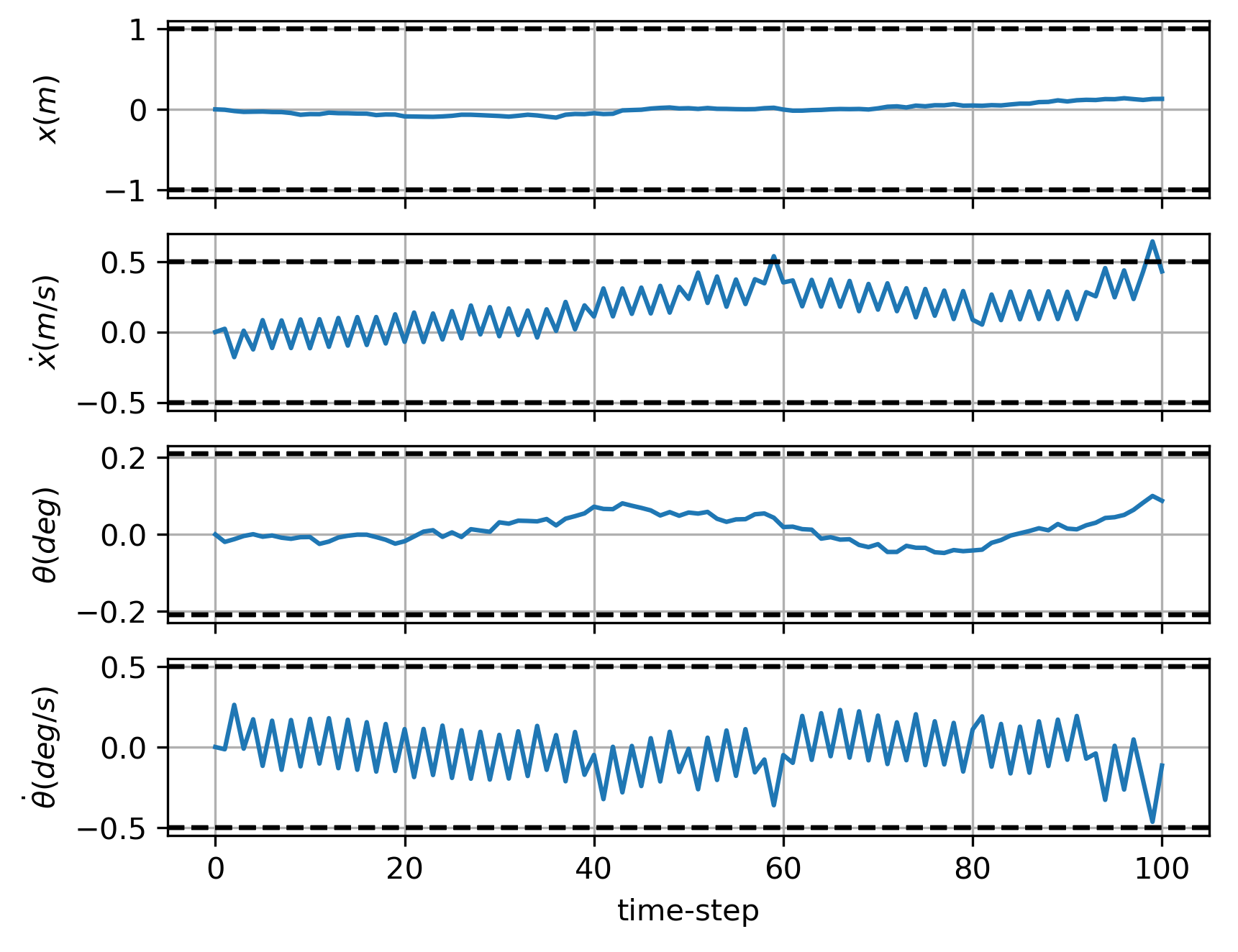}
        \caption{Cartpole without safety controller $\pi$}
        \label{fig: cartpole_no_control_cropped}
    \end{subfigure}
    \begin{subfigure}[t]{0.49\textwidth}
        \centering
        \includegraphics[width=1\linewidth]{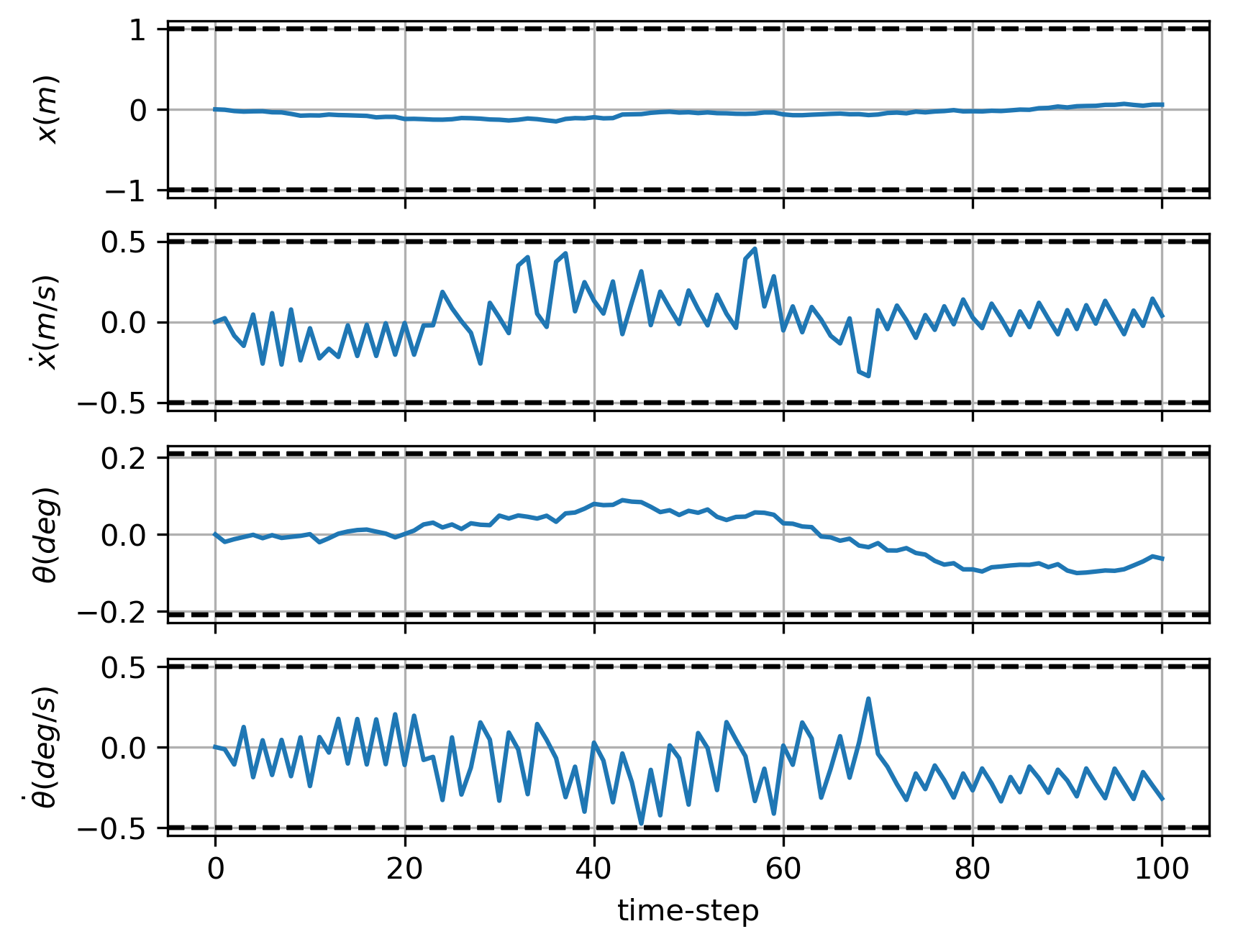}
        \caption{Cartpole with safety controller $\pi$}
        \label{fig: cartpole_control}
    \end{subfigure}%
    
    \begin{subfigure}[t]{0.49\textwidth}
        \centering
        \includegraphics[width=1\textwidth]{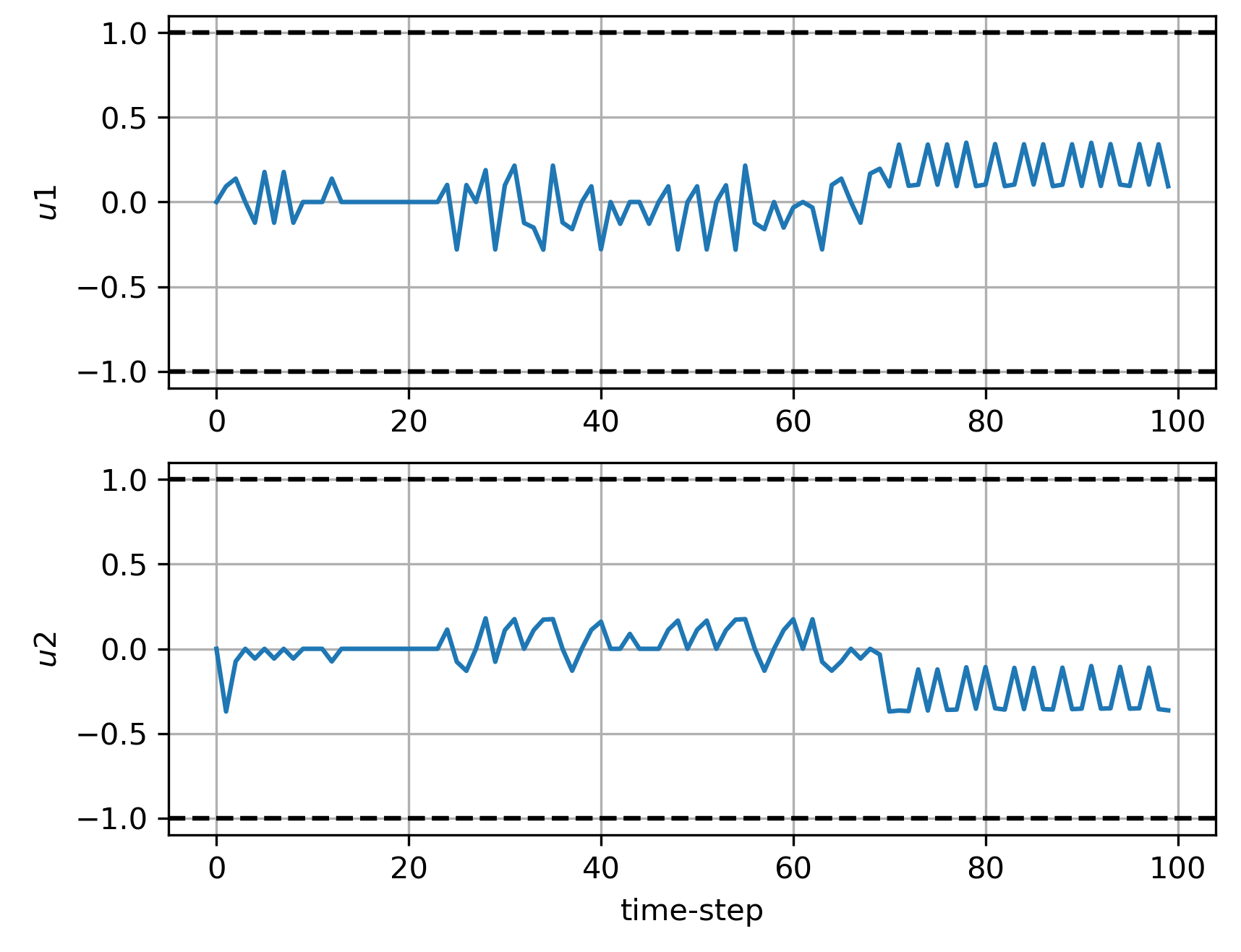}
        \caption{Control evolution under $\pi$}
        \label{fig: cartpole_control_values}
    \end{subfigure}%
    \caption{State evolution of the cartpole NNDM with 2 layers for 100 steps with and without safety feedback controller $\pi$ in (a) and (b), respectively. Blue line shows state evolution for the system starting at $x=0$, $\dot{x} = 0$, $\theta = 0$, and  $\dot{\theta} =  0$. Black dotted line demarcates the state space ($X$) in $\dot{x}$, $\dot{\theta}$, and safe space ($X_\safe$) in $x$ and $\theta$ dimension respectively. 
    }
    \label{fig: cartpole_sim}
\end{figure*}

We trained two Husky NNDMs for a specific task using the approach outlined in \cite{nagabandi2018neural}. Specifically, a hybrid Model-Based Model-Free (MB-MF) architecture has been used for training NNs that represent the dynamic model of the Husky robot \cite{huskyurdf}. Further, this NN with a model-based controller, i.e. Model Predictive Controller (MPC), was used to learn tasks that include moving from one point to another, and staying in a lane.  
We use the data generated from this model to initialize a model-free RL algorithm to gain better sample efficiency. This enabled us to learn the unknown dynamic model and a policy for the task with less number of samples compared to a pure model-free approach. 

In our case study, we first consider a 4-dimensional Husky model \cite{huskyurdf}, whose states are the $x$ and $y$ position, the orientation $\theta$ and the velocity $v$. The NNDM is trained in region $x\in [-0.5, 2]$, $y \in [-1, 1]$, $\theta \in [-\frac{\pi}{12}, \frac{\pi}{12}]$ and $v \in [-0.5, 0.5]$. 
The control actions for this model are linear acceleration and angular velocity with control limits $u_1, u_2 \in [-1, 1]$.  The initial set is any position with a circle of radius 0.1 around the origin, i.e., $x \in [-0.1, 0.1]$ and $y \in [-0.1, 0.1]$. 

Further, we also train a 5-dimensional Husky model in which we observe the angular velocity $\omega$ along with linear velocity, orientation, and position as in the 4-dimensional case-study. The state space is $x \in [-0.5, 2]$, $y \in [-0.5, 0.5]$, $\theta \in [-\frac{\pi}{18}, \frac{\pi}{18}]$, $v \in [-0.5, 0.5]$, and $\omega \in [-0.5, 0.5]$. The safe set is defined over $y$ to be $[-0.5, 0.5]$ while the control actions, control limits and the initial set remains the same.

\begin{table}[t!]
\centering
\caption{State Space $X$ for each agent of dimension $n$ in our case-studies over which we compute linear over-and under approximation $\up{f}_q(x)$ and $\low{f}_q(x)$ and their corresponding extreme values in region $q$ using the discretization $|Q|$ as mentioned in Table \ref{table:discretization}.}
\label{table: state_space}
\resizebox{0.4\textwidth}{!}{
\begin{tabular}{c| c||c| c | c}
\toprule
\multirow{2}{*}{Model} & \multirow{2}{*}{$n$} & \multirow{2}{*}{State} & \multirow{2}{*}{Lower} & \multirow{2}{*}{Upper} \\
 &  &  &  &  \\
 \hline \hline
\multirow{2}{*}{Pendulum} & \multirow{2}{*}{2} & $\theta$ & $-\pi/15$ & $\pi/15$ \\
 &  & $\dot{\theta}$ & $-1$ & $1$ \\ \hline
\multirow{4}{*}{Cartpole} & \multirow{4}{*}{4} & $x$ & $-1$ & $1$ \\
 &  & $\dot{x}$ & $-0.5$ & $0.5$ \\
 &  & $\theta$ & $-\pi/15$ & $\pi/15$ \\
 &  & $\dot{\theta}$ & $-0.5$ & $0.5$ \\ \hline
\multirow{4}{*}{Husky} & \multirow{4}{*}{4} & $x$ & $-0.5$ & $2$\\
 &  & $y$ & $-1$ & $1$ \\
 &  & $\theta$ & $-\pi/12$ & $\pi/12$ \\
 &  & $v$ & $-0.5$ & $0.5$ \\ \hline
\multirow{5}{*}{Husky} & \multirow{5}{*}{5} & $x$ & $-0.5$ & $2$ \\
 &  & $y$ & $-0.5$ & $0.5$ \\
 &  & $\theta$ & $-\pi/18$ & $\pi/18$ \\
 &  & $v$ & $-0.5$ & $0.5$ \\
 &  & $\omega$ & $-0.5$ & $0.5$ \\ \hline
\multirow{6}{*}{Acrobot} & \multirow{6}{*}{6} & $\cos(\theta_1)$ & $-0.1$ & $0.1$ \\
 &  & $\sin(\theta_1)$ & $-0.6$ & $0.6$ \\
 &  & $\cos(\theta_2)$ & $-0.1$ & $0.1$ \\
 &  & $\sin(\theta_2)$ & $-0.6$ & $0.6$ \\
 &  & $\dot{\theta_1}$ & $-0.25$ & $0.25$ \\
 &  & $\dot{\theta_2}$ & $-0.25$ & $0.25$ \\
\bottomrule
\end{tabular}
}
\end{table}

The dynamic model has an input layer with each neuron (six for 4D Husky and seven for 5D Husky in total) representing the individual states and controls of the system. Further, the dynamic model consists of one hidden layer with 500 neurons (ReLU activation function), and the output of the NN is the state difference (4 output neurons for 4D Husky and 5 for 5D Husky) for a defined discretization time $\Delta T$.

\begin{wraptable}[9]{r}{9.0cm}
\vspace{-4.8mm}
\caption{Parameters for dynamic model and policy network training for the husky systems.}
  \label{sample-table}
  \centering
  \resizebox{1.0\linewidth}{!}{%
  \begin{tabular}{lll}
    \toprule
  \textbf{Parameter} & \textbf{4D Husky} & \textbf{5D Husky}  \\
    \midrule
      Open-loop Dynamic Model Architecture & $6 \times 500(\text{ReLU}) \times 4$ & $7 \times 500(\text{ReLU}) \times 5$ \\
      Policy Network Architecture & $4 \times 64(\text{tanh}) \times 64(\text{tanh}) \times 2$ & $5 \times 64(\text{tanh}) \times 64(\text{tanh}) \times 2$ \\
      Dynamic Model Training Epoch & 200 & 200 \\
      Discretization Time $\Delta T$ (secs) & 0.1 & 0.1 \\
      MPC Controller Horizon (Timesteps) & 1 & 1 \\
      Number of MPC rollouts & 100 & 100\\
      Closed-loop NNDM Architecture 1 & $4 \times 256(\text{ReLU}) \times 4$ & $5 \times 512(\text{ReLU}) \times 5$ \\
      Closed-loop NNDM Architecture 2 & $4 \times 256(\text{ReLU}) \times 256(\text{ReLU}) \times 4$ & - \\
    \bottomrule
  \end{tabular}
  }
  \label{table: NNDM_husky_parameters} 
\end{wraptable}

\paragraph{Training}
The data used for training the dynamic model was generated from the described Husky environment modelled in the PyBullet physics simulator \cite{pybullet2021}. The dynamic model is trained using trajectory data generated from PyBullet, by applying random actions to a set of initial states sampled from a distribution. 

The data is normalized to give equal weights to all the states.  We used a MPC controller to generate policies required to complete the specific task. We then generated a set of expert actions from this MPC controller to train a new policy network $\pi_c$. The policy network has one input layer (4 input neurons) representing the states, two hidden layers with 64 neurons each (all tanh activation functions), and one output layer with 2 neurons representing the actions. This policy network $\pi_c$ was used as an initial policy for the model-free RL. We used the Trust Region Policy Optimization algorithm (TRPO) as a policy gradient method to perform RL. The parameters that we used for our training are described in Table \ref{table: NNDM_husky_parameters}.

The trained dynamic model and the policy network $\pi_c$ were combined to represent  the closed-from dynamics of the system.  We performed imitation learning on the input-output data obtained from this combined model, to train a NNDM $\NN: \reals^n \rightarrow \reals^n$, where $n$ is the states of the system.

\subsection{Acrobot}

We train a NNDM to imitate the OpenAI Gym Acrobot model under a given expert controller. The setting we consider is the same as in \cite{paolo2022guided}. It is an underactuated agent (double pendulum) with control applied to the second joint. The NNDM is trained in region from $[-1, 1]$ in the first 4 dimensions, i.e., $\cos(\theta_1)$, $\sin(\theta_1)$, $\cos(\theta_2)$ and $\sin(\theta_2)$. The model was trained over 25000 data points with 300 epochs and validation loss of $10^{-3}$. The state space over which the system is verified is mentioned in Table \ref{table: state_space}.


Here, $\theta_1$ is the angle of the first joint and $\theta_2$ is the angle relative to the angle of the first link. The task for this agent is for the tip of the second link to reach a height of $y = \sin(\theta_1) + \sin(\theta_1 + \theta_2) = 1$. A safety constraint is added such that the tip of the second link should not go beyond $y=1.2$, i.e., $\sin(\theta_1) \leq 0.6$ and $\sin(\theta_2) \leq 0.6$. Hence, the safe set is defines as $\sin(\theta_1) \in [-0.6, 0.6]$ and $\sin(\theta_2) \in [-0.6, 0.6]$. Note, the maximum height the pendulum can reach is 1.8 m. Finally, the initial set is defined to be any initial point within a radius of 0.1 around the origin in the first 4 dimensions.



\end{document}